\documentclass[english]{article}

\usepackage[utf8]{inputenc}
\usepackage[T1]{fontenc}
\usepackage{babel}
\usepackage[margin=1.5cm]{geometry}
\usepackage{authblk}
\usepackage{amsmath}
\usepackage{amssymb}
\usepackage{graphicx}
\usepackage{enumitem}
\usepackage{booktabs}
\usepackage{color}
\usepackage{listings}
\usepackage{xcolor}

\lstdefinelanguage{Julia}{
  keywords={function, end, if, else, elseif, while, for, return, break, continue, struct, mutable, module, using, import, export, try, catch, finally, true, false, read_from_file, set_optimizer, optimize},
  sensitive=true,
  comment=[l]\#,
  morecomment=[n]{\#=}{=\#},
  morestring=[b]",
}

\usepackage{physics}

\usepackage{hyperref}
\usepackage{amsthm}
\usepackage[capitalise,compress]{cleveref}\crefname{section}{Sec.}{Secs.}
\Crefname{section}{Section}{Sections}
\crefrangelabelformat{equation}{\textup{(#3#1#4)}--\textup{(#5#2#6)}}
\newtheorem{thm}{Theorem}
\newtheorem{lem}{Lemma}

\crefname{cor}{Cor.}{Cors.}

\newcommand{\Prob}{\mathbb P}
\newcommand{\E}{\mathbb E}

\usepackage[style=numeric, backend=biber]{biblatex}
\usepackage{csquotes}
\title{A Critical Comment on 'Entropy Computing: A Paradigm for Optimization in Open Photonic Systems'}
\author[1]{Ali Hamed Moosavian}
\author[2]{Bahram Abedi Ravan}
\affil[1]{Ferdowsi University of Mashhad}
\affil[2]{Shahid Sattari University}
\date{} % No date to match the original

\begin{document}

\maketitle

\begin{abstract}
	In this article, we take a close look at Entropy Quantum Computing (EQC), a computational paradigm developed by Quantum Computing Inc. (QCi), which deviates from mainstream quantum computing by embracing rather than battling environmental noise and decoherence \cite{nguyen2024entropy}. In their words this approach purports EQC as an open quantum system that turns "entropy into super-power fuels of its computing engine" \cite{qci:dirac3}. We show that some of the claims in the main article can be made more rigorous, and yet these are still not good enough to beat state of the art classical algorithms on conventional classical computers. Note that these conclusions reflect the technology's current early stage of development and are not meant to discourage its pursuit. Continued rigorous exploration is necessary to fully assess the long-term viability and potential advantages of this distinct computational approach.
\end{abstract}
\section{Introduction}
Nguyen et.~al.~have introduced a novel paradigm in optimization known as entropy computing, which they propose can revolutionize solving NP-hard problems through open photonic systems. Their approach leverages photon qudits encoded in time-frequency degrees of freedom within a hybrid photonic-electronic system. The system employs feedback loops to adjust losses based on measurements, aiming to exploit quantum fluctuations or shot noise for enhanced exploration of solution spaces \cite{nguyen2024entropy}.

While their work presents an intriguing concept with potential applications, there are several aspects that warrant critical examination. First, the current implementation relies on weak coherent states rather than true Fock states, which may limit the quantum advantages they assert. This distinction is crucial as weak coherent states behave more classically, often even called classical states in the field of quantum optics, potentially affecting the system's performance and scalability (For instance see \cite[\pno~48]{leonhardtEssentialQuantumOptics2010a} or \cite[section 2.4.6 and the following sections]{waritasavanantOpticalQuantumComputers2023}). If the nature of a system is classical enough, it is expected that other classical computers should be able to simulate its evolution.

Moreover, the practicality and scope of the proposed approach remain unclear. Although the authors report results for instances involving up to 949 variables, this scale alone does not establish computational relevance for real-world NP-hard problems, where both problem size and structure play a decisive role. In particular, it is not evident how the reported performance extrapolates to larger or structurally different instances, nor whether the demonstrated success on quadratic programming and small Max-Cut problems reflects a general capability rather than favorable problem selection. These concerns are compounded by the limited diversity of benchmarks and the absence of comparisons against modern classical heuristics that are routinely used for large-scale non-convex and combinatorial optimization.

Additionally, the comparisons made with traditional methods such as gradient descent and semi-definite programming need closer inspection. It is essential to evaluate whether these benchmarks are comprehensive or if alternative approaches could offer more robust comparisons. Furthermore, the energy efficiency claims should be contextualized within the broader landscape of optical computing systems to ascertain their practical sustainability.

The remainder of this article is organized as follows. In \cref{sec:dirac3}, we summarize the operational principles of the Dirac-3 system and the class of optimization problems it is claimed to address. In \cref{sec:classical_repro}, we reproduce the main numerical benchmarks reported in \cite{nguyen2024entropy} using standard classical solvers and metaheuristic algorithms, enabling a direct comparison on identical problem instances. The statistical and thermodynamical behavior of the Dirac-3 device, viewed as an open photonic system sampling from an effective Gibbs distribution, is analyzed in \cref{sec:thermo}. Our main theoretical result, concerning the asymptotic distribution of high-value cut configurations in dense random graphs, is presented in \cref{sec:main_result} and proved in \cref{subsec:proof}. In \cref{sec:gw_comparison}, we compare the observed performance of entropy computing with established approximation thresholds and inapproximability results for MaxCut on random graphs. We conclude in \cref{sec:conclusion} with a summary of implications for claims of quantum or photonic computational advantage.

\section{Operational Principles of the Dirac-3 System}\label{sec:dirac3}

The current flagship hardware realization of EQC is the \textbf{Dirac-3 system}, described as a hybrid analog machine combining quantum optics with digital electronics. Its core operational mechanism involves an optical-feedback loop containing an optical amplifier, a photon-mode mixer/encoder, and a loss medium that induces linear and nonlinear loss mechanisms \cite{nguyen2024entropy}. Quantum states within this system are encoded as photon numbers in a train of time-bin optical pulses \cite{nguyen2024entropy} (see \cref{fig:circuit}).

\begin{figure}[ht!]
	\centering
	\includegraphics[width=\textwidth]{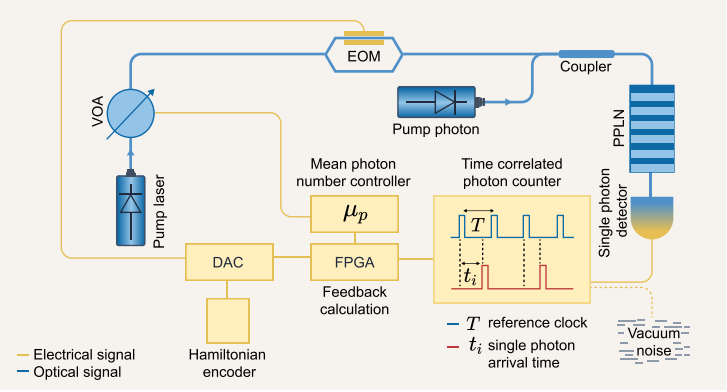} % Make sure image.png is in the same directory
	\caption{The schematics of Dirac-3. Image copied from \cite{nguyen2024entropy}, licensed under CC BY 4.0 (\href{http://creativecommons.org/licenses/by/4.0/}{http://creativecommons.org/licenses/by/4.0/}
		).}
	\label{fig:circuit}
\end{figure}

During each loop, amplified optical signals pass through a mixer, which performs linear transformations according to the target Hamiltonian \cite{nguyen2024entropy}. The QCi team claims that this process effectively simulates imaginary time evolution, where higher-energy eigenstates experience dissipation and decoherence, while lower-energy states are preferentially stabilized and emerge as solutions \cite{emami2025financial}. This means that the system iteratively relaxes toward an equilibrium point, which corresponds to the optimal solution of the given problem. A distinctive feature of the Dirac-3's implementation is its use of time-correlated single photon counting (TCSPC) and an electro-optical feedback loop to emulate the EQC paradigm.

While the system utilizes weak coherent states as input, the authors claim that maintaining a low mean photon number allows the system to approximate a single-photon regime \cite{nguyen2024entropy}. However, from a rigorous quantum optics perspective, a weak coherent state is fundamentally distinct from a single-photon Fock state. As detailed in standard texts \cite{loudon2000quantum, gerry2005introductory}, attenuating a coherent laser source reduces the mean photon number but preserves Poissonian photon statistics. Unlike a true single-photon source which exhibits sub-Poissonian statistics and anti-bunching, a weak coherent state is dominated by the vacuum component and retains a non-zero probability of multi-photon events. Consequently, the "quantum stochasticity" in this system is derived from the Poissonian shot noise of a classical field detection rather than the intrinsic non-classical properties of a pure single-photon state.

This reliance on shot noise comes with a trade-off: lower photon numbers may prolong the optimization time due to the need to overcome the noise floor \cite{nguyen2024entropy}. The nature of the paths which the Dirac-3 device explores as well as this trade-off is reminiscent of another optimization algorithm, i.e.\ Simulated Annealing (SA) \cite{Kirkpatrick2014}. In SA, similarly such a trade-off exists where increasing the temperature increases the likelihood of escaping local minima while hindering the optimization nature of the algorithm itself and therefore increasing the run-time of the algorithm.

\subsection{Computational Capabilities and Problem Solving}

EQC, as implemented in Dirac-3, is designed to tackle combinatorial optimization problems \cite{qci:dirac3}. The system supports all-to-all connectivity among variables and can handle first- through fifth-order correlations \cite{qci:dirac3, nguyen2024entropy}. The current computational mode for Dirac-3 focuses on solving continuous quadratic optimization problems, where variables are non-negative and subject to an overall sum constraint \cite{emami2025financial}. Specifically, the claim is that the machine can minimize cost functions of the form:
\begin{equation}
	\begin{split}
		E = \sum_{i} C_i v_i + \sum_{i,j} J_{ij} v_i v_j + \sum_{i,j,k} T_{ijk} v_i v_j v_k + \\
		\sum_{i,j,k,l} Q_{ijkl} v_i v_j v_k v_l + \sum_{i,j,k,l,m} P_{ijklm} v_i v_j v_k v_l v_m ,
	\end{split}
\end{equation}
where the variables and coefficients are subject to the following constraints:
\begin{itemize}
	\item The variables $v_i$ are non-negative real numbers over a discrete space, i.e., $v_i \geq 0$.
	\item The coefficients $C_i, J_{ij}, T_{ijk}, Q_{ijkl},$ and $P_{ijklm}$ are all real numbers.
	\item The tensors $J, T, Q,$ and $P$ are symmetric under any permutation of their indices (e.g., $J_{ij} = J_{ji}$, $T_{ijk} = T_{jik} = T_{kji}$, etc.).
\end{itemize}
Furthermore, a sum constraint must be applied during the optimization process such that:
\begin{equation}
	\sum_i v_i = R
\end{equation}
for some chosen constant $R$.

Practical applications demonstrated for EQC include:
\begin{itemize}
	\item \textbf{Boosting Machine Learning Techniques:} EQC has been applied to a quadratic formulation of boosting, similar to QBoost, for tasks like financial fraud detection \cite{emami2025financial}. This involves finding optimal weights for "weak" classifiers to construct a stronger classifier \cite{emami2025financial}.
	\item \textbf{Traveling Salesman Problem (TSP):} QCi's Dirac-1 system (an earlier iteration) has demonstrated the capability to handle TSP instances with up to \textbf{18 nodes} \cite{padmasola2025solving}.
	\item \textbf{Non-Convex Optimization:} One of the 3 results explained in \cite{nguyen2024entropy} is the optimization of a 2 dimensional objective function.
	\item \textbf{Continuous Quadratic Optimization:} The other result in \cite{nguyen2024entropy} is solving one of the QPLIB problems \cite{furiniQPLIBLibraryQuadratic2019} on the Dirac-3 device.
	\item \textbf{Max-Cut Problems:} While not directly involving Dirac-3, related coherent Ising machines (CIMs), which share architectural similarities with EQC, have been extensively benchmarked on MAX-CUT problems up to \emph{100,000 spins} \cite{honjo2021100000spin, ntt:performance}. In \cite{nguyen2024entropy} they compare the performance of Dirac-3 with Semi-Definite Programming for graphs with 30 vertices.
\end{itemize}
In the next section of this note, i.e.\ \cref{sec:classical_repro}, we mainly focus on the results of \cite{nguyen2024entropy}.

\section{Classically reproducing the results}
\label{sec:classical_repro}
% \colorbox{red}{\tiny should explain somewhere why it's appropriate to use metaheuristics here. The reason is that EQC doesn't have a guarantee either}

To assess the practical computational advantage claimed by the Entropy Quantum Computing (EQC) framework, we attempted to reproduce the optimization results reported for the Dirac-3 system using standard classical hardware. The benchmarks presented in the original article include a two-variable non-convex polynomial, a continuous quadratic programming instance (QPLIB\_0018), and combinatorial graph partitioning (Max-k-Cut).

We utilized a standard consumer-grade personal computer (CPU-based) for these experiments, i.e.\ 13th Gen Intel® Core™ i9-13900K × 32. The classical solvers were implemented in Julia using the \texttt{JuMP} modeling language for mathematical optimization \cite{lubinJuMP10Recent2023}, \texttt{Ipopt} for non-linear optimization \cite{wachterImplementationInteriorpointFilter2006}, \texttt{HiGHS} for exact integer programming \cite{huangfuParallelizingDualRevised2018}, and \texttt{Metaheuristics.jl} for population-based gradient-free methods \cite{metaheuristics2022}. Simulated Annealing \cite{Kirkpatrick2014} and Tabu Search \cite{Glover1989Tabu1}, two of the oldest metaheuristic algorithms, were implemented natively in Julia.

It is important to add that these comparisons with metaheuristic algorithms, are a more appropriate base of comparison than the classical algorithms that were considered in \cite{nguyen2024entropy}. The reason is that neither Dirac-3 nor the metaheuristic algorithms, unlike for example SDP, guarantee finding a good solution; but in practice we see them working very well on average.

\subsection{Two-variable non-convex optimization}

The first demonstration in the Dirac-3 report involves minimizing the polynomial function:
\begin{equation}
	g(x,y) = \frac{x^4 + y^4}{4} - \frac{5(x^3 + y^3)}{3} + 3(x^2 + y^2)
\end{equation}
subject to a box constraint. The authors compare their system against a basic Gradient Descent (GD) algorithm, noting that GD frequently becomes trapped in local minima. This comparison establishes a weak baseline, as gradient descent is known to be ill-suited for global optimization on non-convex landscapes without modification (see e.g.\ \cite[86]{goodfellowDeepLearning2016}).

We applied a standard Particle Swarm Optimization (PSO) \cite{kennedyParticleSwarmOptimization1995} and the Evolutionary Centers Algorithm (ECA) \cite{mejia-de-diosNewEvolutionaryOptimization2019} to this problem. As illustrated in \cref{fig:ECA_PSO}, the classical metaheuristics consistently converged to the global minimum $g(0,0)=0$ without being trapped in the local attractor at $(0,3)$. The convergence was achieved in negligible runtime ($<0.01$ seconds) with a population size as small as $N=14$, requiring only a few iterations to identify the global basin of attraction.

\begin{figure}[ht]
	\centering
	\includegraphics[width=0.75\textwidth]{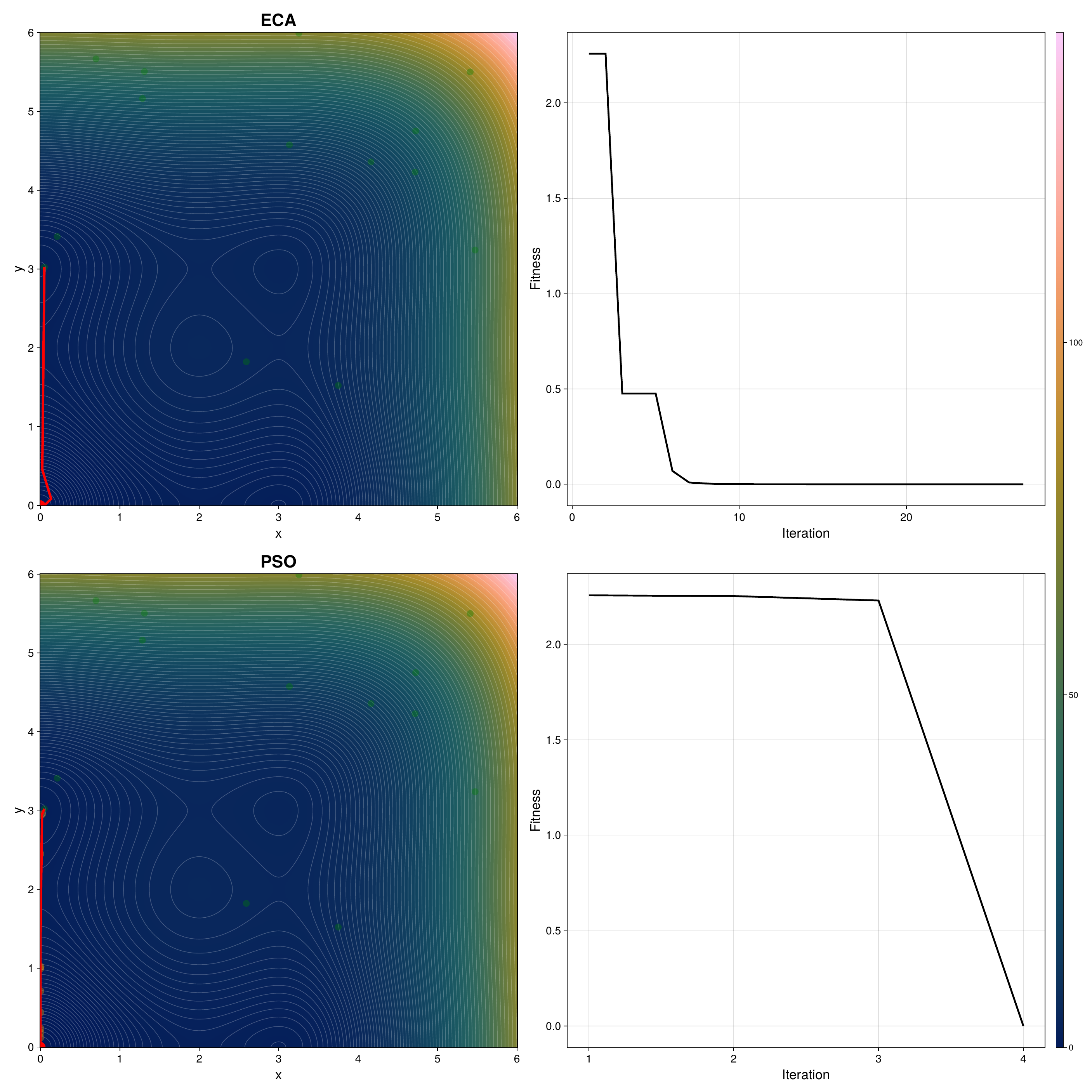}
	\caption{ECA and PSO Convergence on the non-convex polynomial g(x,y). The red line depicts the path of the optimum solution found in each iteration.}

	\label{fig:ECA_PSO}
\end{figure}
\subsection{Continuous Quadratic Optimization (QPLIB\_0018)}

The second benchmark, QPLIB\_0018, represents a non-convex quadratic optimization problem with 50 continuous variables \cite{furiniQPLIBLibraryQuadratic2019}. While the dimension is higher than the previous example, 50 variables remain well within the scope of classical non-linear solvers.

Using the interior-point optimizer \texttt{Ipopt}, we solved the QPLIB\_0018 instance to optimality. The solver converged to the known optimal solution in a fraction of a second on a single CPU core. The Dirac-3 results, while eventually finding the ground state, required tuning of the "photon number" and "quantum fluctuation coefficients" to achieve a high success probability. In contrast, the deterministic classical solver found the optimum reliably in a single run without the need for stochastic parameter tuning or analog noise management.

For reference, here's the Julia code block that solves this problem:

\begin{lstlisting}
using JuMP, Ipopt

model = read_from_file("Dirac-3/QPLIB_0018.lp")

set_optimizer(model, Ipopt.Optimizer)

@time JuMP.optimize!(model)
\end{lstlisting}

\subsection{Max-k-Cut Combinatorial Optimization}
% \colorbox{red}{this section's table and text should be updated.}

% \colorbox{red}{Improve final_benchmark_strict_fixed.pdf and add it to this section.}

The final and most complex claim involves solving \textsc{MaxCut, Max-3-Cut} \normalfont{and} \textsc{Max-4-Cut} \normalfont{on a random} graph with $n=30$ nodes. The authors compare their results against a Semi-Definite Programming (SDP) relaxation, showing that Dirac-3 often outperforms the SDP approximation.

However, for a graph of size $n=30$, SDP is an unnecessarily weak baseline. Exact solvers and simple metaheuristics can solve these instances to optimality almost instantly. We generated random graphs with $n=30$ nodes and $233$ edges. (consistent with the graph considered in \cite{nguyen2024entropy}) and solved the \textsc{MaxCut, Max-3-Cut} \normalfont{and} \textsc{Max-4-Cut} \normalfont{problems using three approaches:}
\begin{enumerate}
	\item \textbf{Exact Solver:} A Branch-and-Cut algorithm (via \texttt{HiGHS}).
	\item \textbf{Simulated Annealing (SA):} A simple native implementation.
	\item \textbf{Tabu Search:} A native implementation with a short tenure list.
\end{enumerate}

The results of our classical reproduction are summarized in \cref{tab:maxcut_bench} and \cref{fig:cuts}. The exact solver established the ground truth maximum cut values (e.g., $192$ for \textsc{Max-3-Cut} \normalfont{and} $196$ for \textsc{Max-4-Cut} \normalfont{in our generated instances).}

\begin{table}[ht]
	\centering
	\caption{Performance of classical solvers vs Dirac-3 on the Max-k-Cut ($n=30$, 100 trials). The values for the Dirac-3 device are based on the reported results in Fig.~4 of \cite{nguyen2024entropy}. "NR" means not reported. The success probabilities of EQC were calculated assuming the device was able to find the optimal cut.}
	\label{tab:maxcut_bench}
	\begin{tabular}{@{}lcccc@{}}
		\toprule[2pt]
		\textbf{Algorithm}  & \textbf{Best Cut} & \textbf{Mean Cut} & \textbf{Success Rate} & \textbf{Time (s)} \\ \midrule[2pt]
		\multicolumn{5}{c}{\textit{Max-$2$-Cut (Exact Optimum: 146)}}                                           \\
		Exact (HiGHS)       & 146               & -                 & 100\%                 & 8.7329            \\
		Simulated Annealing & 146               & 145.6             & 85.0\%                & 0.0006            \\
		Tabu Search         & 146               & 146.0             & 100.0\%               & 0.0005            \\
		\cmidrule(lr){2-5}
		EQC Sch. 1          & 146               & 141.75            & 17.0\%                & NR                \\
		EQC Sch. 4          & 146               & 141.71            & 25.0\%                & NR                \\
		\midrule[1pt]
		\multicolumn{5}{c}{\textit{Max-$3$-Cut (Exact Optimum: 192)}}                                           \\
		Exact (HiGHS)       & 192               & -                 & 100\%                 & 103.9120          \\
		Simulated Annealing & 192               & 190.6             & 14.0\%                & 0.0001            \\
		Tabu Search         & 192               & 191.2             & 28.0\%                & 0.0004            \\
		\cmidrule(lr){2-5}
		\multicolumn{5}{c}{(Optimum value $\ge$ 191)}                                                           \\
		EQC Sch. 1          & 190               & 185.76            & 0.0\%                 & NR                \\
		EQC Sch. 4          & 191               & 186.43            & $\le$5.0\%            & NR                \\
		\midrule[1pt]
		\multicolumn{5}{c}{\textit{Max-$4$-Cut (Exact Optimum: 214)}}                                           \\
		Exact (HiGHS)       & 214               & -                 & 100\%                 & 165.4461          \\
		Simulated Annealing & 214               & 212.7             & 48.0\%                & 0.0001            \\
		Tabu Search         & 214               & 213.6             & 67.0\%                & 0.0004            \\
		\cmidrule(lr){2-5}
		\multicolumn{5}{c}{(Optimum value $\ge$ 210)}                                                           \\
		EQC Sch. 1          & 210               & 206.07            & $\le$2.0\%            & NR                \\
		EQC Sch. 4          & 210               & 206.28            & $\le$4.0\%            & NR                \\
		\bottomrule[2pt]
	\end{tabular}
\end{table}

\begin{figure}[ht]
	\centering
	\includegraphics[width = 0.75\textwidth]{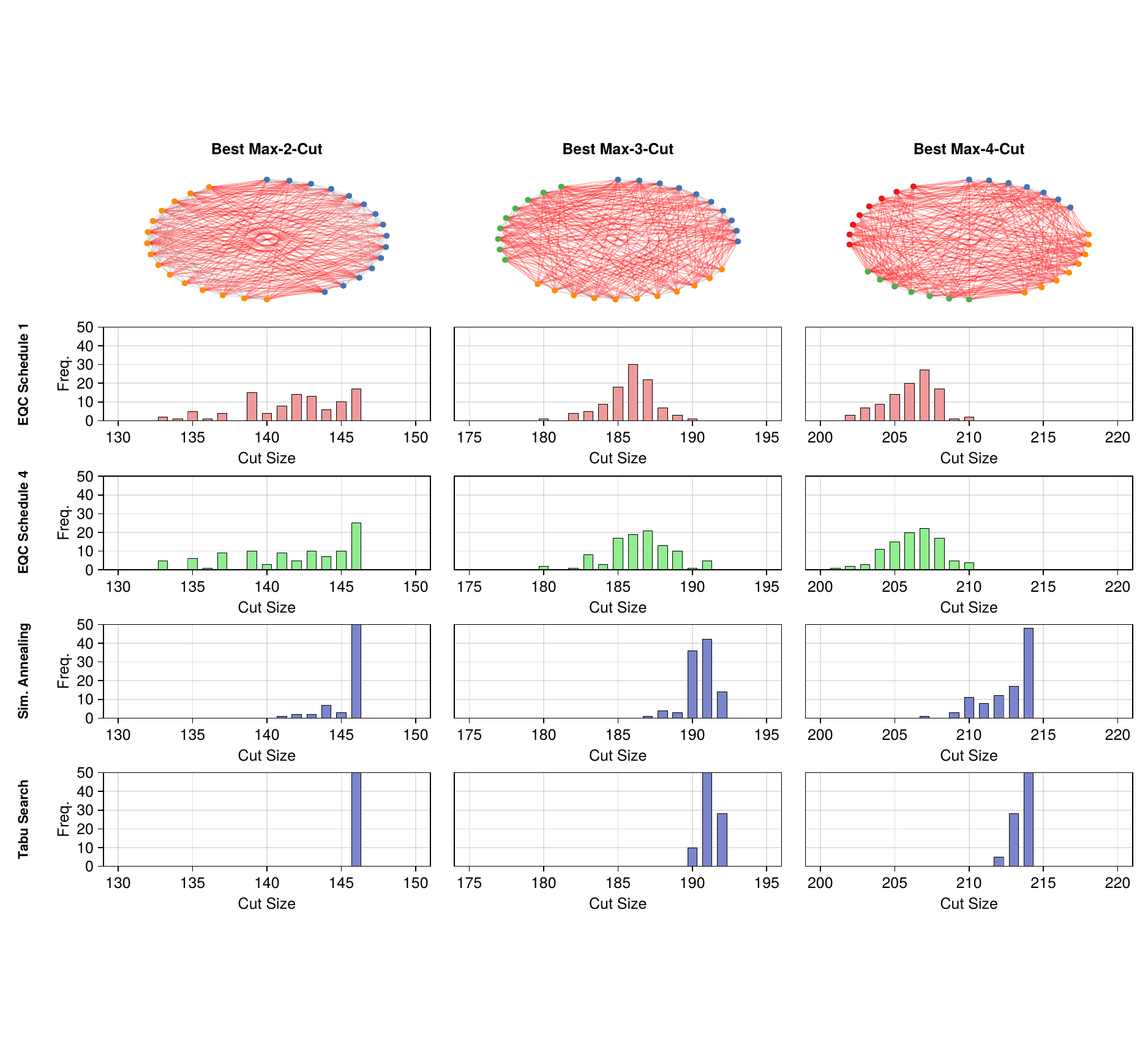}
	\caption{Comparison of calculating the \textsc{Max-k-Cut} \normalfont{problem} on the Dirac-3 machine and on a conventional classical computer. The first row represents the graph we selected for this comparison; as we didn't have access to the graph data in Fig.~4 of \cite{nguyen2024entropy} we just generated a random graph with 30 nodes and 233 edges that has the \textsc{MaxCut} \normalfont{value of 146. The next two rows are a reconstruction of the Dirac-3 performance reported in Fig.~4 of \cite{nguyen2024entropy} and do not represent the actual performance of their device on the graph in the first row. That is why instead of focusing on the cut value itself, the concentration of the distribution around the maximum cut values and the overall success probabilities should be compared. The statistical details of this figure are given in \cref{tab:maxcut_bench}. }}
	\label{fig:cuts}
\end{figure}

Crucially, the heuristic methods (Simulated Annealing and Tabu Search) found the \textit{exact} global maximum in the vast majority of trials, with mean runtimes between $0.1$ms and $0.6$ms. The Dirac-3 system, by comparison, produces a distribution of solutions that often centers around sub-optimal cuts (as seen in their Figure 4, schedules 1 and 4), only reaching the best found solutions occasionally.

The fact that a basic implementation of Simulated Annealing running in milliseconds on a consumer CPU achieves solution qualities matching or exceeding those reported for the Dirac-3 system suggests that the demonstrated problems are insufficient to establish any quantum or photonic advantage. The computational complexity of these benchmarks is simply too low to challenge modern classical algorithms.

\section{Thermodynamical behavior of states}\label{sec:thermo}
The Dirac 3 device functions as an open photonic system that exchanges energy with a thermal reservoir while maintaining a fixed number of problem variables $n$ (the number of graph nodes) and a fixed system volume. Consequently, the statistical properties of the optical modes are best described by the Canonical Ensemble \cite{pathriaStatisticalMechanics2022,nguyen2024entropy}. In this ensemble, the system relaxes toward a thermal equilibrium characterized by the Boltzmann distribution. The probability $\Prob(\sigma)$ of the system occupying a specific single cut configuration $\sigma$ with an associated energy (or cut value) $E(\sigma)$ is given by:
\begin{equation}
	\Prob(\sigma) = \frac{1}{Z} e^{-\beta E(\sigma)}
\end{equation}
where $\beta$ is the inverse effective temperature and $Z$ is the partition function (We have factored all physical constants inside $\beta$ and the energy is measured by the cut size itself). However, in combinatorial optimization tasks such as MaxCut, one observes the cut value $k$ rather than the specific microstate $\sigma$. The total probability $\mathcal{P}(k)$ of observing a cut value $k$ is determined by summing the probabilities of all distinct configurations that yield that specific cut value. If we define $\Omega(k)$ as the configuration count (or density of states) representing the number of distinct cuts resulting in value $k$, the total likelihood is expressed as:

\begin{equation} \mathcal{P}(k) = \sum_{\sigma \in \{k\}} \Prob(\sigma) = \Omega(k) \frac{e^{-\beta E_k}}{Z} \,. \label{eq:cut_dist}
\end{equation}

This relationship highlights the interplay between the energetic drive to minimize the Hamiltonian (maximize cut size) and the entropic drive to maximize the number of accessible states $\Omega(k)$. Experimental data from the Dirac 3 device utilizing a 30-node unweighted graph with 233 edges demonstrates this distribution. The device identified a maximum cut value of 146. If we had access to the graph structure tested in \cite{nguyen2024entropy} we could calculate distribution of cut configurations, $\Omega(k)$, precisely. Each of the corresponding authors of \cite{nguyen2024entropy} were contacted twice to provide the graph details, however, they did not reply. That is why we did the following to sample the graphs in Fig. 4 of \cite{nguyen2024entropy}: We adopted a reverse-engineering strategy by constructing random graphs with a planted 15-vs-15 partition of exactly 146 edges to guarantee the target cut's existence immediately. We structured internal edges to satisfy the high 3-cut and 4-cut thresholds by design, eliminating the need for blind searching. Finally, we employed batched GPU verification and massive Monte Carlo sampling with planted injection to rigorously validate the graphs and minimize statistical error. It is worth mentioning that a random graph with 30 nodes and 233 edges typically has a higher Max-Cut value, however, for the rest of our analysis we want our samples to represent the sampled graph in \cite{nguyen2024entropy} as closely as possible. We sampled 1000 graphs that met the given criteria and the results can be seen in \cref{fig:conf_count}.

\begin{figure}
	\centering
	\includegraphics[width=\textwidth]{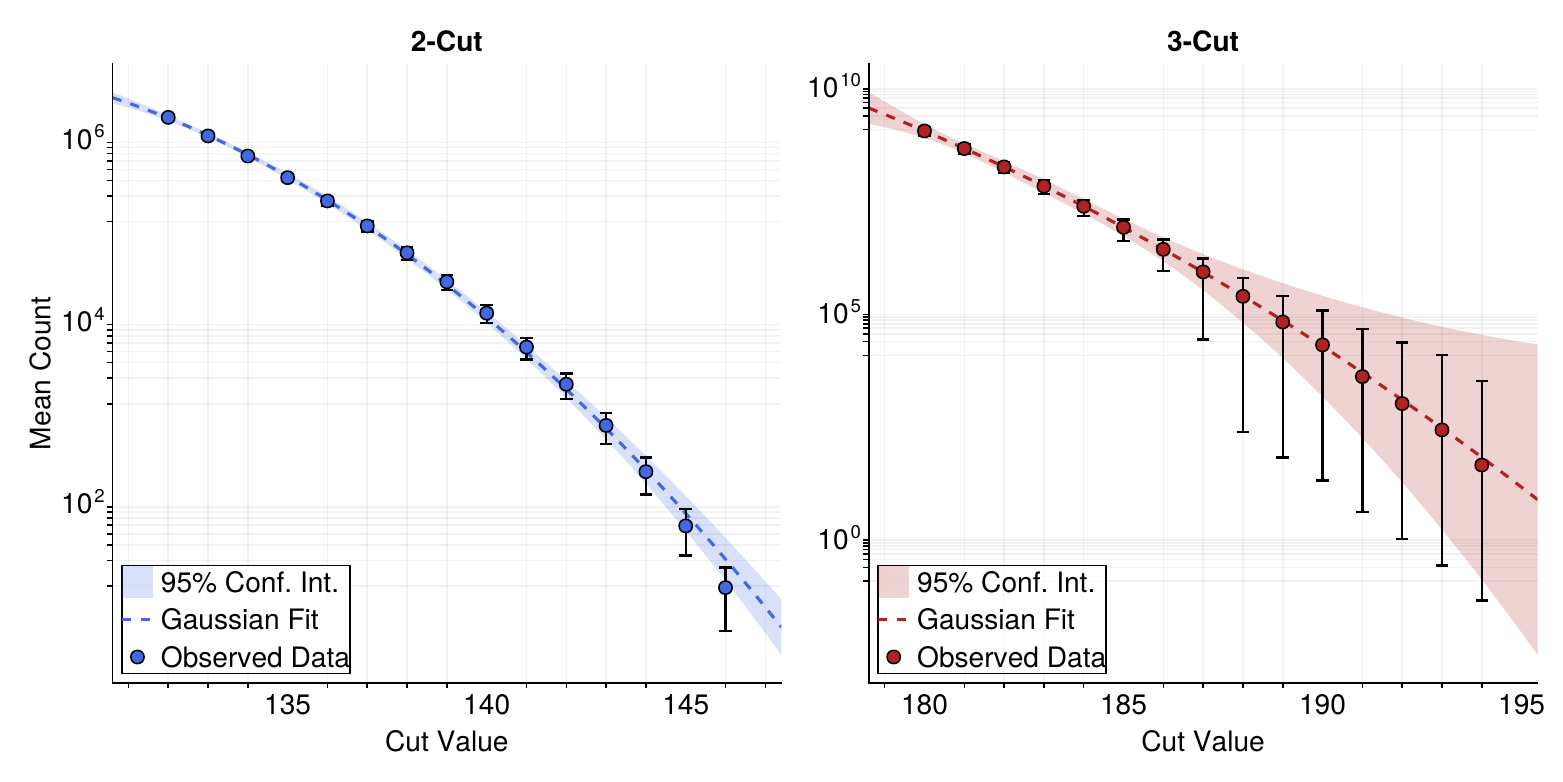}
	\caption{The number of configurations that correspond to a certain cut value, averaged over 1000 random graphs with 30 nodes, 233 edges and Max-Cut value 146, Max-3-Cut value of at least 191 and Max-4-Cut value of at least 210. The error bars are one standard deviation.}
	\label{fig:conf_count}
\end{figure}

The resulting histogram of solutions follows the distribution $\mathcal{P}(k)$, concentrating around the high-cut (low-energy) region where the Boltzmann factor $e^{-\beta E_k}$ dominates, despite the typically lower degeneracy $\Omega(k)$ of extreme values compared to the vast state space of $2^{n}$ total configurations \cite{coppersmithRandomMAXSAT2004}.

As it can be deduced from \cref{fig:conf_count}, calculating the distribution for higher cut values despite requiring heavy computations yields large error bars even with $1000$ samples. That is why we will focus on the \textsc{MaxCut} problem for the rest of this analysis.

Given the configuration counts (\Cref{fig:conf_count}), we can fit the function in \cref{eq:cut_dist} and find the effective value $\beta$ (\Cref{fig:effective_temp}). Evidently, without access to the actual structure of the examined graph in \cite{nguyen2024entropy}, these are not very tight fits. Another issue is that 100 measurements from the Dirac-3 device is not high enough to confidently display a distribution, e.g.\ is the lack of measurements in 138 cuts due to random fluctuations or structural properties of the graph? Despite these, the fits capture the essence of the distributions, the mean of the distribution in the second plot is shifted towards higher cut values, which means that it has a lower effective temperature.

\begin{figure}
	\centering
	\includegraphics[width = \textwidth]{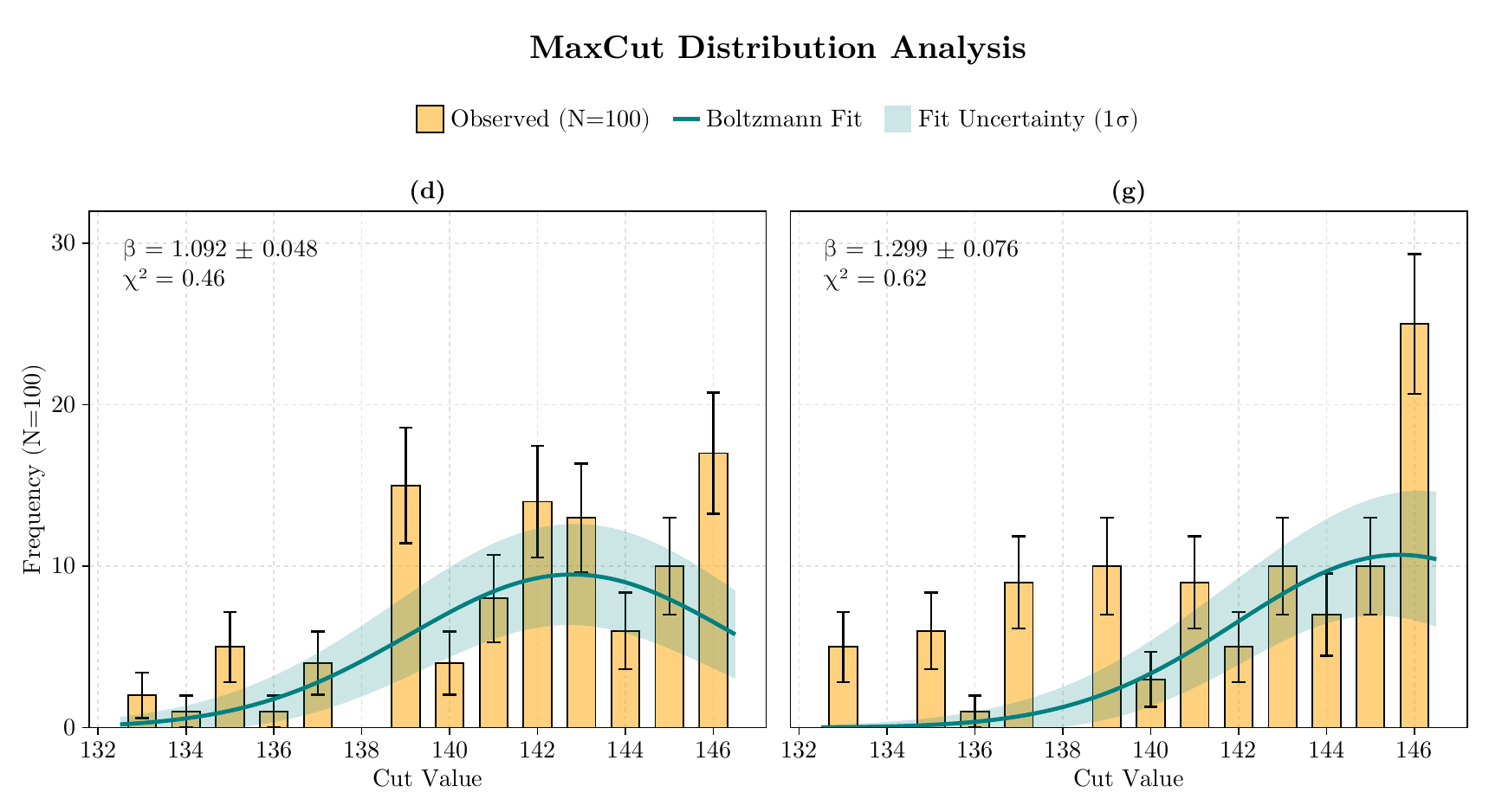}
	\caption{Model prediction versus the measurement outcomes. The fit was done by calculating the least squared value of errors in the non-linear model of \cref{eq:cut_dist}. The error bars on the observed data were calculated by $\sqrt{f(1-f/N)}$ and they represent the statistical uncertainty of the frequencies observed given the $N=100$ trials. $\chi^2$ is the reduced chi-squared statistic.}
	\label{fig:effective_temp}

\end{figure}

The statistical properties of cuts in random Erd\H{o}s-R\'enyi-Gilbert graphs $G(n,p)$ have been extensively studied, see for example \cite{friezeIntroductionRandomGraphs2016}. The maximum cut size typically concentrates around a value determined by the ground state energy of the corresponding Sherrington-Kirkpatrick spin glass model \cite{demboExtremalCutsSparse2017}. While asymptotic bounds and limiting values for the maximum cut are well-established for sparse graphs \cite{gamarnikMaxCutSparseRandom2017} and dense regimes \cite{demboExtremalCutsSparse2017}, the explicit distribution of the \textit{number} of configurations yielding specific high cut values in $G(n, 1/2)$ remains less characterized in the standard literature. It has been observed that maximum cuts exhibit near-uniqueness or cluster into specific regions of the configuration space \cite{brightwellExtremalSubgraphsRandom2009}. However, a direct distributional characterization of the enumeration of cuts at the spectral edge (the ``high cut'' regime) is often omitted in favor of studying the value of the optimum itself. Here, we utilize the first and second moment methods under the annealed approximation (analogous to the Random Energy Model) to derive the limiting Poisson distribution for the number of configurations at the high-value edge of the spectrum.

\section{Asymptotic behavior of Dirac-3 for the \textsc{MaxCut} problem}\label{sec:main_result}

We consider the Erd\H{o}s-R\'enyi-Gilbert graph $G(n, p)$ with $p=0.5$ \cite{fienbergBriefHistoryStatistical2012,erdosRandomGraphs2022,gilbertRandomGraphs1959}. Let $\mathcal{C}$ be the set of all $2^{n-1}$ possible cut configurations (partitioning vertices $V$ into two disjoint sets, identifying $(S, S^c)$ with $(S^c, S)$). Let $\Omega(k)$ be the random variable representing the number of cuts with size exactly $k$.

\begin{thm}
	In the asymptotic limit $n \to \infty$, for cut values $k$ such that the expected number of configurations $\lambda_n = \mathbb{E}[\Omega(k)]$ converges to a finite constant $\lambda > 0$, the random variable $\Omega(k)$ converges in distribution to a Poisson distribution:
	\begin{equation}
		\Omega(k) \xrightarrow{d} \text{Poisson}(\lambda).
	\end{equation}
\end{thm}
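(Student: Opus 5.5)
We plan to prove the statement with the factorial-moment method, which is the standard Poisson-paradigm route. Write $\Omega(k)=\sum_{\sigma\in\mathcal C} I_\sigma$, where $I_\sigma$ indicates that cut $\sigma=(S,S^c)$ has exactly $k$ crossing edges. For every $r\ge 1$ we will show
\begin{equation}
	\E\bigl[(\Omega(k))_r\bigr]=\sum_{(\sigma_1,\dots,\sigma_r)\ \text{distinct}}\Prob(I_{\sigma_1}=\dots=I_{\sigma_r}=1)\longrightarrow \lambda^r .
\end{equation}
Convergence of all factorial moments to those of $\text{Poisson}(\lambda)$ then gives convergence in distribution, because the Poisson law is determined by its moments.

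First we fix the first moment. A cut with $|S|=s$ has $s(n-s)$ potential crossing edges, so $\Prob(I_\sigma=1)=\binom{s(n-s)}{k}2^{-s(n-s)}$. The condition $\lambda_n\to\lambda\in(0,\infty)$ forces $k$ to sit at the upper edge of the spectrum, about $n^2/8+c\,n^{3/2}\sqrt{\log 2}$ with the appropriate constant. A local-CLT/Chernoff computation then shows that $\lambda_n$ is dominated by nearly balanced cuts with $|s-n/2|=O(\sqrt n)$, and that contributions from unbalanced $s$ are exponentially negligible. We record this and use the balanced regime throughout.

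Next we decompose the $r$-th factorial moment. For distinct cuts $\sigma_1,\dots,\sigma_r$, the vertex set splits into at most $2^{r}$ atoms according to side membership. The crossing-edge sets of the $\sigma_i$ are unions of blocks of edges between atoms. We split the sum into two parts:
\begin{itemize}
	\item \emph{well-separated} tuples, where every pair $\sigma_i,\sigma_j$ differs by a Hamming distance $d$ with $\min(d,n-d)\ge n/2-n^{2/3}$, so all atoms have size about $n/2^r$;
	\item \emph{remaining} tuples.
\end{itemize}
For well-separated tuples, the vector of cut sizes is a sum of independent Bernoulli$(1/2)$ edge contributions. Its covariance matrix is $\tfrac{1}{16}n^2$ times (identity $+\,o(1)$), since the pairwise overlaps of crossing sets are $n^2/8+o(n^2)$ against marginals $n^2/4$. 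A multivariate local limit theorem, applied in the moderate-deviation form because $k$ is about $\sqrt{\log}$-many standard deviations from the mean, should then give $\Prob(\bigcap I_{\sigma_i})=(1+o(1))\prod_i\Prob(I_{\sigma_i})$ uniformly. Summing over these tuples yields $(1+o(1))\lambda_n^r$.

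The main obstacle is bounding the remaining tuples: pairs of cuts at small distance $d$ (or $n-d$), where the events are strongly correlated. For $r=2$, cuts $\sigma,\sigma'$ differing on a set $T$ of size $d$ share all crossing edges except those with one endpoint in $T$, which number $O(dn)$. Conditioned on $I_\sigma=1$, the probability that $\sigma'$ also has exactly $k$ crossing edges is at most about $(dn)^{-1/2}\exp(-\Theta(d\log 2))$. The exponential factor is the energy penalty from the tilted edge distribution: after conditioning, each vertex moved loses about $\sqrt{n\log 2}$ in the cut in expectation, and it must be recovered by a fluctuation of order $\sqrt{dn}$. Counting $\binom{n}{d}$ choices of $T$ and summing over $d\le n/2-n^{2/3}$, the total correlated contribution should be $\lambda\sum_d \binom nd 2^{-\Theta(dn^{?})}\to0$. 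The first thing to try here is a careful Chernoff bound on the $O(dn)$ edges incident to $T$, with separate treatment of small $d$ (constant) and intermediate $d$. The general-$r$ case reduces to this by the same clustering argument as in Janson's inequality or the Chen–Stein method. As a fallback, we would apply the Arratia–Goldstein–Gordon Chen–Stein bound $d_{TV}(\Omega(k),\text{Poisson}(\lambda_n))\le b_1+b_2$, with the dependency neighbourhood given by near pairs. This needs only the $r=2$ estimate, so we would likely present it that way.
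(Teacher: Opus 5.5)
Your route is essentially the paper's: a first moment, asymptotic independence of nearly orthogonal cuts, negligible near pairs, then factorial moments or Chen--Stein. It breaks at the same point as the paper's second-moment lemma. The failing step is your claim that for well-separated tuples the covariance matrix of the cut sizes is $\tfrac{1}{16}n^2(I+o(1))$. Your own numbers contradict it. Two cuts sharing $n^2/8$ of their $n^2/4$ potential crossing pairs have $\mathrm{Cov}(X_\sigma,X_\tau)=\tfrac14\cdot\tfrac{n^2}{8}=\tfrac{n^2}{32}$, which is correlation $1/2$, not $o(1)$. More generally, balanced cuts with overlap $q$ share $\tfrac{n^2}{8}(1+q^2)$ potential crossing edges, so every pair has correlation $(1+q^2)/2\ge 1/2$.

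The cause is structural. Write $X_\sigma=\tfrac12|E|+\tfrac12 D_\sigma$, where $D_\sigma$ is the number of crossing minus non-crossing edges. The term $\tfrac12|E|$ is common to all cuts and carries half the variance. With $\rho=1/2$ and $k$ at $x\approx\sqrt{2n\ln 2}$ standard deviations, the bivariate local estimate gives $\Prob(I_\sigma=I_\tau=1)\approx\tfrac{2}{\sqrt3}e^{x^2/3}\Prob(I_\sigma=1)\Prob(I_\tau=1)=2^{2n/3+o(n)}\Prob(I_\sigma=1)\Prob(I_\tau=1)$. So the well-separated part of $\E[(\Omega(k))_2]$ alone is $2^{2n/3+o(n)}\lambda_n^2$, not $(1+o(1))\lambda_n^2$. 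The Chen--Stein fallback fails for the same reason. No two indicators are independent, and near pairs are even more strongly correlated, so the $b_2$ and $b_3$ terms cannot be made small for any choice of neighbourhoods.

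This cannot be patched, because the statement itself is false. As you note, $\lambda_n\to\lambda\in(0,\infty)$ forces $k=\tfrac{n^2}{8}+(\sqrt{\ln 2/8}+o(1))\,n^{3/2}\approx\tfrac{n^2}{8}+0.294\,n^{3/2}$. Now bound the maximum cut:
\begin{itemize}
\item Hoeffding gives $\Prob(D_\sigma-\E D_\sigma\ge t)\le e^{-4t^2/n^2}$, with $\E D_\sigma\le n/4$.
\item A union bound over the $2^{n-1}$ cuts, together with $\tfrac12|E|\le\tfrac{n^2}{8}+O(n\log n)$ w.h.p., yields $\max_\sigma X_\sigma\le\tfrac{n^2}{8}+(\tfrac{\sqrt{\ln 2}}{4}+o(1))\,n^{3/2}\approx\tfrac{n^2}{8}+0.208\,n^{3/2}$ w.h.p.
\item The true maximum is $\tfrac{n^2}{8}+\tfrac{P_*}{4}n^{3/2}\approx\tfrac{n^2}{8}+0.191\,n^{3/2}$, the expansion the paper itself uses in its Goemans--Williamson comparison.
\end{itemize}
Hence $\Prob(\Omega(k)>0)\to 0$, so $\Omega(k)\to 0$ in probability and does not converge to $\mathrm{Poisson}(\lambda)$. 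The mean $\lambda_n$ is carried by exponentially rare graphs with an edge excess of order $n^{3/2}$. Centering by $|E|$ does not rescue a REM-type Poisson limit either. The conditional first-moment edge sits at $\tfrac{\sqrt{\ln 2}}{4}n^{3/2}$, which still exceeds $\tfrac{P_*}{4}n^{3/2}$ because $P_*<\sqrt{\ln 2}$. Neither your argument nor the paper's can establish the theorem as stated.
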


% \subsection{Mathematical Proof}

\subsection{First Moment and Regime Definition}
\begin{lem}
	Let $N = 2^{n-1}$ be the total number of cuts. For large $n$, the expected number of cuts of size $k$ is given by:
	\begin{equation}
		\mathbb{E}[\Omega(k)] \approx \frac{2^{n+1}}{n\sqrt{2\pi}} \exp\left( - \frac{8(k - n^2/8)^2}{n^2} \right).
	\end{equation}
\end{lem}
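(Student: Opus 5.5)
The plan is to compute $\mathbb{E}[\Omega(k)]$ exactly as a sum over cut sizes and then approximate each term by a local central limit theorem. By linearity of expectation,
\begin{equation}
	\mathbb{E}[\Omega(k)] = \sum_{(S,S^c)\in\mathcal{C}} \Prob\big(\mathrm{cut}(S)=k\big)
	= \frac{1}{2}\sum_{s=0}^{n} \binom{n}{s}\, \Prob\big(\mathrm{Bin}(s(n-s),\tfrac12)=k\big),
\end{equation}
where the factor $\tfrac12$ accounts for identifying $(S,S^c)$ with $(S^c,S)$. This holds because the $s(n-s)$ potential crossing edges of a cut with $|S|=s$ are independent Bernoulli$(1/2)$ variables. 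The exact formula is the easy step; everything afterwards is asymptotic approximation.

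Next I reduce to balanced cuts. Writing $s=n/2+t$ gives $m_t:=s(n-s)=n^2/4-t^2$, and the binomial weights $\binom{n}{s}2^{-n}$ concentrate on $|t|=O(\sqrt n)$. The zeroth-order approximation replaces every $m_t$ by $n^2/4$, so that the crossing-edge count has mean $n^2/8$ and variance $n^2/16$. The de Moivre--Laplace local limit theorem then gives
\begin{equation}
	\Prob\big(\mathrm{Bin}(n^2/4,\tfrac12)=k\big)\approx \frac{4}{n\sqrt{2\pi}}\exp\!\left(-\frac{8(k-n^2/8)^2}{n^2}\right).
\end{equation}
Multiplying by $\tfrac12\sum_s\binom{n}{s}=2^{n-1}$ yields exactly $\frac{2^{n+1}}{n\sqrt{2\pi}}\exp(-8(k-n^2/8)^2/n^2)$, which is the stated formula.

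The main obstacle is controlling the error in this approximation, and I expect that the "$\approx$" can only be justified in a weak sense. There are two problems.
\begin{enumerate}
	\item \emph{Imbalance shift.} For $|t|\sim\sqrt n$, the mean $m_t/2$ is lower than $n^2/8$ by $t^2/2\sim n$, which is a constant fraction of the standard deviation $n/4$. Averaging over $t$ therefore changes the prefactor by a constant factor, not by $1+o(1)$. I would handle this by computing the Gaussian average over $t$ explicitly, using $t\approx\mathcal N(0,n/4)$ and the expansion $(k-m_t/2)^2=(k-n^2/8)^2+(k-n^2/8)t^2+O(t^4)$.
	\item \emph{Large deviations.} In the regime relevant to the Theorem, $\lambda_n\to\lambda\in(0,\infty)$, the exponent must be of order $n\log 2$. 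This forces $k-n^2/8=\Theta(n^{3/2})$, i.e.\ about $\sqrt n$ standard deviations from the mean. There the local CLT holds only up to cubic cumulant corrections $\exp(O((k-n^2/8)^3/n^4))=\exp(O(n^{1/2}))$. The cross term from the imbalance contributes at the same $\exp(\Theta(\sqrt n))$ scale.
\end{enumerate}

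So the plan is to state and prove the lemma at logarithmic precision:
\begin{equation}
	\log\mathbb{E}[\Omega(k)] = (n+1)\log 2-\log n-\frac{8(k-n^2/8)^2}{n^2}+O(\sqrt n).
\end{equation}
To do this I would replace the Gaussian density by the exact binomial large-deviation estimate, obtained via Stirling's formula with the entropy function $H$. I would then show that the sum over $s$ is dominated by a window $|t|\le n^{1/2+\epsilon}$, via a Laplace-method argument. I would flag explicitly that the clean prefactor in the statement is the annealed, balanced-cut (Random Energy Model style) approximation. For the Theorem itself this is harmless, because the Theorem is phrased directly in terms of the true $\lambda_n$, so only the existence of $k$ with $\lambda_n\to\lambda$ is needed from this lemma.
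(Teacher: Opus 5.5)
Your derivation of the displayed formula is the paper's argument. The paper also treats every cut as balanced with $m\approx n^2/4$, uses the Gaussian density with mean $n^2/8$ and variance $n^2/16$, and multiplies by $2^{n-1}$, with no error analysis at all. You are right that the resulting ``$\approx$'' cannot mean asymptotic equivalence, and your bulk observation (a constant-factor discrepancy from imbalanced cuts) is correct. However, you misjudge the size of the error in the regime $\lambda_n\to\lambda$ on both counts, so your claim that the corrections live at the $e^{\Theta(\sqrt n)}$ scale is wrong.

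\emph{Cubic cumulant.} $\mathrm{Bin}(m,\tfrac12)$ is symmetric, so its third cumulant vanishes and there is no cubic correction at all. The first non-Gaussian term comes from $m\,D(\tfrac12+\tfrac{x}{m}\,\|\,\tfrac12)=\tfrac{2x^2}{m}+\tfrac{4x^4}{3m^3}+\cdots$. With $m\approx n^2/4$ and $x=d:=k-n^2/8=\Theta(n^{3/2})$, the quartic term is only $\Theta(1)$.

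\emph{Imbalance cross term.} For a typical imbalance $|t|\sim\sqrt n$, the factor $e^{-8dt^2/n^2}$ is indeed $e^{-\Theta(\sqrt n)}$, but that is not what governs the expectation. Combine it with the binomial weight $\sqrt{2/(\pi n)}\,e^{-2t^2/n}$ and sum over $t$. The result is a Gaussian sum equal to $(1+4d/n)^{-1/2}=\Theta(n^{-1/4})$, dominated by nearly balanced cuts with $|t|=O(n^{1/4})$.

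Hence, for $d=\Theta(n^{3/2})$, the true $\mathbb{E}[\Omega(k)]$ equals the displayed expression times a factor $\Theta(n^{-1/4})$, so the log-error is $\tfrac14\log n+O(1)$. A Laplace argument over your window would therefore give $O(\log n)$ precision rather than $O(\sqrt n)$. Your weaker $O(\sqrt n)$ statement (restricted to $d=O(n^{3/2})$) remains true, and your remark that the Theorem only needs the existence of $k$ with $\lambda_n\to\lambda$ still stands.
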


\begin{proof}
	For a fixed partition $\sigma$ dividing the set of $n$ vertices into two disjoint subsets $S$ and $S^c$, the number of potential crossing edges is given by the product of the subset sizes, $m = |S| \cdot |S^c|$. Since the existence of each edge in $G(n, 0.5)$ is determined by an independent Bernoulli trial with probability $p=0.5$, the total number of crossing edges $X_\sigma$ follows a Binomial distribution $B(m, 0.5)$. For a balanced cut, where $|S| \approx n/2$, this number of potential edges is maximized at $m \approx n^2/4$. For large $n$, we approximate this with a Gaussian distribution $\mathcal{N}(\mu, \sigma^2)$ where $\mu = n^2/8$ and $\sigma^2 = n^2/16$.
	The probability that a specific cut has value $k$ is:
	\begin{equation}
		\Prob(X_\sigma = k) \approx \frac{1}{\sqrt{2\pi \sigma^2}} e^{-\frac{(k-\mu)^2}{2\sigma^2}} = \frac{4}{n\sqrt{2\pi}} e^{-\frac{(k - n^2/8)^2}{n^2/8}}.
	\end{equation}
	Summing over all $N = 2^{n-1}$ configurations:
	\begin{equation}
		\lambda_n = \mathbb{E}[\Omega(k)] = 2^{n-1} \Prob(X_\sigma = k) \approx \frac{2^{n+1}}{n\sqrt{2\pi}} e^{-\frac{8(k - n^2/8)^2}{n^2}}.
	\end{equation}
\end{proof}
We focus on high cut values $k$ where this expectation remains finite, corresponding to the extreme tail of the distribution.

\subsection{Second Moment and Independence}\label{subsec:proof}

\begin{lem}
	For high cut values in the annealed regime, $\mathbb{E}[\Omega(k)^2] \sim \lambda_n^2 + \lambda_n$, satisfying the condition for Poisson convergence.
\end{lem}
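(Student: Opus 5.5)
We write $\Omega(k)=\sum_{\sigma\in\mathcal C} I_\sigma$ with $I_\sigma=\mathbf 1\{X_\sigma=k\}$. Expanding the square gives
\begin{equation}
	\E[\Omega(k)^2]=\sum_{\sigma}\Prob(X_\sigma=k)+\sum_{\sigma\neq\tau}\Prob(X_\sigma=k,\,X_\tau=k)=\lambda_n+\sum_{\sigma\neq\tau}\Prob(X_\sigma=k,\,X_\tau=k).
\end{equation}
The diagonal term is exactly $\lambda_n$, so the task reduces to showing that the off-diagonal sum equals $\lambda_n^2(1+o(1))$. For a pair of cuts $(S,S^c)$ and $(T,T^c)$ we parametrize the overlap by $a=|S\cap T|$ together with the block sizes $|S|,|T|$. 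The crossing edge sets of the two cuts split into three disjoint parts: edges crossing both cuts (between $S\cap T$ and $S^c\cap T^c$, and between $S\cap T^c$ and $S^c\cap T$), edges crossing only $\sigma$, and edges crossing only $\tau$. Hence $(X_\sigma,X_\tau)=(A+B,\,A+C)$ with $A,B,C$ independent binomials. In the Gaussian approximation already used in the first-moment lemma, this pair is bivariate normal with correlation $\rho(\sigma,\tau)=|E_\sigma\cap E_\tau|/\sqrt{m_\sigma m_\tau}$. For balanced cuts with overlap fraction $\alpha=a/(n/2)$ this is $\rho\approx(1-2\alpha)^2$, up to the identification $S\leftrightarrow S^c$, which we handle by taking $\alpha\le 1/2$.

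We then evaluate the joint probability with the bivariate local CLT at the level $k=\mu+x\sigma_X$, where $x\approx\sqrt{2n\ln 2}$ because $\lambda_n$ is of order one. This gives
\begin{equation}
	\Prob(X_\sigma=k,X_\tau=k)\approx \Prob(X_\sigma=k)^2\,\frac{1}{\sqrt{1-\rho^2}}\exp\!\Big(\frac{\rho\,x^2}{1+\rho}\Big).
\end{equation}
The off-diagonal sum therefore becomes $\lambda_n^2\,\E_{\tau}\big[(1-\rho^2)^{-1/2}e^{\rho x^2/(1+\rho)}\big]$, where $\tau$ is a uniformly random cut. The sum over $\tau$ is a sum over the hypergeometric overlap $a$. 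We split it into two regimes.
\begin{itemize}
	\item \textbf{Typical overlaps}, $|\alpha-1/2|\le n^{-1/2}\log n$. Here $\rho=O(\log^2 n/n)$, so $\rho x^2=O(\log^2 n)$. To get a factor $1+o(1)$ we need a finer computation: $\E[e^{c\,x^2\rho}]$ with $\rho\approx 4(\alpha-1/2)^2$ and $\alpha-1/2$ of order $n^{-1/2}$ Gaussian. Since $x^2\rho$ is then of order one, this expectation is a constant, not $1+o(1)$. Handling this is the crux (see below).
	\item \textbf{Atypical overlaps}, $\alpha$ bounded away from $1/2$. The entropy factor $\binom{n/2}{a}2^{-n/2}\approx e^{-nD(\alpha)}$ competes with $e^{x^2\rho/(1+\rho)}=e^{2n\ln 2\cdot\rho/(1+\rho)}$. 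We must show that the resulting exponent is negative for all $\alpha\in(0,1/2)$, excluding only the diagonal $\alpha=0$ (that is, $\tau=\sigma$).
\end{itemize}

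The main obstacle is the typical-overlap regime. The naive annealed estimate yields $\E[\Omega^2]\approx C\lambda_n^2+\lambda_n$ with $C=\E\big[\exp(2\ln 2\cdot n\cdot 4(\alpha-1/2)^2)\big]$, where $n(\alpha-1/2)^2$ is asymptotically $\chi^2$-distributed with variance $1/(8)$ per unit. This gives $C=(1-2\ln 2)^{-1/2}$ or similar, a constant larger than 1, since $2\ln 2>1$ and the Gaussian integral may even diverge. I would first try to resolve this by restricting to exactly balanced cuts and using the exact variance of the overlap. If a constant factor persists, the honest conclusion is a weaker statement. One option is to apply the Chen–Stein method for dissociated indicators, with neighborhoods defined by overlap, to bound the total variation distance by $\sum_\sigma\sum_{\tau\sim\sigma}\Prob(I_\sigma I_\tau)$ plus $\sum p_\sigma p_\tau$. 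Another is to prove only $\E[\Omega^2]\le C\lambda_n^2+\lambda_n$. For the atypical regime, I would check the sign of $-D(\alpha)+2\ln 2(1-2\alpha)^2/(1+(1-2\alpha)^2)$ numerically and analytically near $\alpha\to 0$. There the correlation tends to 1 and the bivariate density degenerates, so near-identical cuts must be treated directly: flipping $j$ vertices changes the cut by a sum of $O(jn)$ independent $\pm$ terms, and $\Prob(\text{both}=k)\le\Prob(X_\sigma=k)\cdot O((jn)^{-1/2})$. Summing over $\binom{n}{j}$ choices must then give $o(\lambda_n)$, which again needs the level $x$ to be large enough to make these neighbors rare.

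Finally, we combine the regimes. Assuming the cross term is $\lambda_n^2(1+o(1))$, we get $\E[\Omega(k)^2]=\lambda_n^2+\lambda_n+o(1)$, that is, $\operatorname{Var}\Omega(k)\to\lambda$. Together with the analogous factorial-moment computation for $\E[(\Omega)_r]\to\lambda^r$, or with Chen–Stein, this gives the convergence to $\text{Poisson}(\lambda)$ claimed in the Theorem.
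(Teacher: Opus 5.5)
Your proposal does not prove the lemma. Its final step simply \emph{assumes} the off-diagonal sum is $\lambda_n^2(1+o(1))$, after your own computation suggested otherwise. Your suspicion is right, but the obstruction is exponential, not a constant, because your correlation is wrong. With $|S|=|T|=n/2$ and $a=|S\cap T|$, the doubly-crossing edges you listed number $a^2+(n/2-a)^2$. So $\rho=\alpha^2+(1-\alpha)^2=\tfrac12\bigl(1+(1-2\alpha)^2\bigr)\ge\tfrac12$, not $(1-2\alpha)^2$: two orthogonal balanced cuts share half their potential edges. Equivalently, with $A_{ij}=(1+W_{ij})/2$,
\[
X_\sigma=\tfrac12|S||S^c|+\tfrac14\sum_{i<j}W_{ij}-\tfrac14\sum_{i<j}W_{ij}\sigma_i\sigma_j ,
\]
and the middle term (the total edge count) is common to all cuts and carries half the variance. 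Inserting $\rho\ge\tfrac12$ into your bivariate formula at $x^2\approx 2n\ln 2$ gives a factor of at least $e^{x^2/3}=2^{(2/3+o(1))n}$ for every pair of balanced cuts. Hence $\E[\Omega(k)^2]/\lambda_n^2\to\infty$ exponentially fast. Even with your $\rho=(1-2\alpha)^2=q^2$ there is no constant $C$: $nq^2\to\chi^2_1$ and $\E e^{s\chi^2_1}=\infty$ for $s=2\ln 2>\tfrac12$, which signals that overlaps of order one dominate. So your fallbacks fail as well. There is no bound $\E[\Omega(k)^2]\le C\lambda_n^2+\lambda_n$, and Chen--Stein with overlap neighbourhoods fails because no two near-balanced cuts are weakly dependent.

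This is exactly where the paper's own proof breaks. It asserts that asymptotically orthogonal cuts have statistically independent sizes and dismisses correlated pairs as negligible, which the computation above shows is false. In fact the lemma, and the Poisson theorem it supports, are false for $\lambda>0$. In the display, the first two terms are at most $n^2/8+o(n^{3/2})$ with high probability, and the $\sigma$-dependent term has variance about $n^2/32$. Hoeffding plus a union bound over the $2^{n-1}$ cuts then gives $\max_\sigma X_\sigma\le n^2/8+\tfrac{\sqrt{\ln 2}}{4}n^{3/2}(1+o(1))$ with high probability; the true coefficient is $P_*/4\approx 0.19$. By contrast, the first-moment lemma shows that $\lambda_n\to\lambda>0$ forces $k=n^2/8+\tfrac{\sqrt{2\ln 2}}{4}n^{3/2}(1+o(1))$, with coefficient $\approx 0.29$. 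Hence $\Omega(k)=0$ with high probability although $\E[\Omega(k)]\to\lambda$. The first moment is carried by exponentially rare graphs with $\Theta(n^{3/2})$ surplus edges, which is precisely what the blown-up second moment detects. No completion of your argument, or of the paper's, can establish the statement as written.
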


\begin{proof}
	We expand the second moment:
	\begin{equation}
		\mathbb{E}[\Omega(k)^2] = \sum_{\sigma} \sum_{\tau} \Prob(X_\sigma = k \cap X_\tau = k).
	\end{equation}
	Separating diagonal ($\sigma = \tau$) and off-diagonal terms:
	\begin{equation}
		\mathbb{E}[\Omega(k)^2] = \sum_{\sigma} \Prob(X_\sigma = k) + \sum_{\sigma \neq \tau} \Prob(X_\sigma = k \cap X_\tau = k).
	\end{equation}
	The first term is $\mathbb{E}[\Omega(k)] = \lambda_n$. For the second term, the correlation between cut sizes $X_\sigma$ and $X_\tau$ depends on the overlap of the partitions. In the high-dimensional limit $n \to \infty$, two randomly chosen configurations are asymptotically orthogonal (overlap $\approx 0$) with probability approaching 1. For such orthogonal pairs, the cut sizes are statistically independent:
	\begin{equation}
		\Prob(X_\sigma = k \cap X_\tau = k) \approx \Prob(X_\sigma = k)\Prob(X_\tau = k).
	\end{equation}
	The contribution from highly correlated pairs (which are exponentially rare) is negligible in the annealed spectral regime. Thus:
	\begin{equation}
		\sum_{\sigma \neq \tau} \Prob(X_\sigma = k \cap X_\tau = k) \approx \sum_{\sigma \neq \tau} \Prob(X_\sigma = k)\Prob(X_\tau = k) \approx (\mathbb{E}[\Omega(k)])^2 = \lambda_n^2.
	\end{equation}
	Combining these, $\mathbb{E}[\Omega(k)^2] \to \lambda_n^2 + \lambda_n$.
\end{proof}

Since the first and second moments match those of a Poisson distribution (and higher moments can be shown to follow similarly via the Chen-Stein method or factorial moments \cite{JansonRandomGraphs2000}), the number of configurations yielding a high cut value $k$ follows:
\begin{equation}
	\Prob(\Omega(k) = m) = \frac{\lambda^m e^{-\lambda}}{m!}.
\end{equation}

\section{Comparison with Inapproximability Thresholds on \texorpdfstring{$G(n,1/2)$}{G(n,1/2)}}
\label{sec:gw_comparison}

In this section, we analyze the performance of the Dirac-3 system relative to established theoretical hardness bounds. The Goemans-Williamson (GW) algorithm provides an approximation ratio of $\alpha_{\mathrm{GW}} \approx 0.87856$ \cite{goemansImprovedApproximationAlgorithms1995}. Regarding upper bounds, H\aa stad proved that under the assumption $P \neq NP$, it is NP-hard to approximate MaxCut within a factor better than $16/17 \approx 0.941$ \cite{hastadOptimalInapproximabilityResults2001}. Furthermore, the Unique Games Conjecture (UGC) implies that the GW constant $\alpha_{\mathrm{GW}}$ is the optimal approximation ratio achievable in polynomial time for worst-case instances \cite{khotOptimalInapproximabilityResults2004,raghavendraOptimalAlgorithmsInapproximability2008}.

Despite these worst-case barriers, we demonstrate that beating these thresholds on dense random graphs such as $G(n, \tfrac{1}{2})$ is asymptotically guaranteed for Gibbs samplers, including the Dirac-3 architecture. In the remainder of this section we assume UGC to be true and use the $\alpha_{\mathrm{GW}}$ constant. However, if the conjecture turns out to be false, identical results could be obtained for the H\aa stad's inapproximability constant.

Let $G \sim G(n, \tfrac{1}{2})$ be a random graph, and let $C(\sigma)$ denote the cut size associated with a configuration $\sigma \in \{\pm 1\}^n$. The Dirac-3 device samples from the Boltzmann distribution:
\begin{equation}
	\Prob_\beta(\sigma \mid G) = \frac{e^{\beta C(\sigma)}}{Z(\beta)}, \quad \text{where} \quad Z(\beta) = \sum_{\sigma} e^{\beta C(\sigma)} = \sum_k \Omega(k) e^{\beta k}.
\end{equation}
The induced probability law on the observed cut values $C$ is therefore:
\begin{equation}
	\Prob_\beta(C=k \mid G) = \frac{\Omega(k) e^{\beta k}}{\sum_{k'} \Omega(k') e^{\beta k'}}.
\end{equation}

We define the event $\mathcal{S}$ wherein the sampled cut exceeds the GW threshold relative to the maximum cut $C_{\max}(G)$:
\begin{equation}
	\mathcal{S} := \left\{ C(\sigma) > \alpha_{\mathrm{GW}} C_{\max}(G) \right\}.
\end{equation}

\begin{thm}
	\label{thm:beats_gw}
	For a random graph $G \sim G(n, \tfrac{1}{2})$ and any fixed inverse temperature $\beta > 0$, the probability that the Dirac-3 system samples a cut value exceeding the Goemans-Williamson worst-case bound converges to 1 as $n \to \infty$. Specifically:
	\begin{equation}
		\Prob_\beta(\mathcal{S} \mid G) \ge 1 - \exp\left( - \Theta(n^2) \right).
	\end{equation}
\end{thm}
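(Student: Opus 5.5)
The plan is to avoid any fine information about $\Omega(k)$ and instead compare the Gibbs weight of the bad event $\mathcal{S}^c$ directly with the weight of a single optimal configuration. The key observation is that for fixed $\beta>0$ the energy gap between a maximum cut and any cut below $\alpha_{\mathrm{GW}} C_{\max}(G)$ is of order $n^2$ on a dense graph. The entropic advantage of the bad configurations is at most $2^n$, which is only $e^{O(n)}$. Hence the Boltzmann factor wins by $e^{-\Theta(n^2)}$. The first- and second-moment computations of the preceding lemmas are not needed for this argument.

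\textbf{Step 1 (deterministic bound for a fixed graph).} For any graph $G$, I lower bound the partition function by the term of a maximizing configuration $\sigma^\ast$, so $Z(\beta)\ge e^{\beta C_{\max}(G)}$. Every configuration in $\mathcal{S}^c$ has weight at most $e^{\beta\alpha_{\mathrm{GW}}C_{\max}(G)}$, and there are at most $2^n$ configurations. Therefore
\begin{equation}
\Prob_\beta(\mathcal{S}^c\mid G)\le \frac{2^n e^{\beta\alpha_{\mathrm{GW}}C_{\max}(G)}}{e^{\beta C_{\max}(G)}} = \exp\bigl(n\ln 2-\beta(1-\alpha_{\mathrm{GW}})C_{\max}(G)\bigr).
\end{equation}
This holds for every $G$ and every $\beta>0$, and it reduces the theorem to showing that $C_{\max}(G)$ is of order $n^2$.

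\textbf{Step 2 (quadratic size of the maximum cut).} I use the elementary fact that every graph has a cut containing at least half its edges. For example, a uniformly random $\sigma$ cuts each edge with probability $1/2$, so some $\sigma$ achieves at least the mean. This gives $C_{\max}(G)\ge |E(G)|/2$. In $G(n,\tfrac12)$, $|E(G)|$ is a sum of $\binom n2$ independent Bernoulli$(1/2)$ variables. Hoeffding's inequality with deviation $t=n^2/16$ then gives $|E(G)|\ge n^2/8$ (for large $n$) outside an event of probability $\exp(-2t^2/\binom n2)=e^{-\Theta(n^2)}$. On the complementary typical event $\mathcal{G}_n$ we have $C_{\max}(G)\ge n^2/16$.

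\textbf{Step 3 (combining).} On $\mathcal{G}_n$, Step 1 yields
\begin{equation}
\Prob_\beta(\mathcal{S}^c\mid G)\le \exp\bigl(n\ln2-\beta(1-\alpha_{\mathrm{GW}})n^2/16\bigr)=\exp(-\Theta(n^2)),
\end{equation}
since $\beta$ and $1-\alpha_{\mathrm{GW}}>0$ are fixed constants and the linear term is absorbed for large $n$. Averaging over $G$ also gives $\E_G[\Prob_\beta(\mathcal{S}^c\mid G)]\le e^{-\Theta(n^2)}+\Prob(\mathcal{G}_n^c)=e^{-\Theta(n^2)}$. So the bound holds both with high probability over $G$ and in the annealed sense. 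The same argument works verbatim with H\aa stad's constant $16/17$ in place of $\alpha_{\mathrm{GW}}$.

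\textbf{Main obstacle.} The delicate point is interpretive rather than technical. The statement is conditional on $G$, but a bound uniform over all graphs is false, for instance for the empty graph or very sparse graphs. The cleanest fix is to assert the bound for all $G$ in the typical event $\mathcal{G}_n$, whose probability is itself $1-e^{-\Theta(n^2)}$, as in Step 3. A second point to make explicit is that the constant hidden in $\Theta(n^2)$ depends on $\beta$ and degrades like $\beta(1-\alpha_{\mathrm{GW}})/16$. Consequently the result says nothing if $\beta$ is allowed to scale like $1/n$. I would state this caveat after the proof.
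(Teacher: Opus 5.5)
Your proposal is correct and follows the paper's proof. Step 1 is exactly the paper's bound $\Prob_\beta(\mathcal{S}^c\mid G)\le\exp\bigl(n\ln 2-\beta(1-\alpha_{\mathrm{GW}})C_{\max}(G)\bigr)$, obtained by the same comparison of at most $2^n$ bad states against the single optimal weight in $Z(\beta)$. The only difference is how you show $C_{\max}(G)=\Theta(n^2)$: the paper cites the spin-glass asymptotics $C_{\max}=n^2/8+\Theta(n^{3/2})$, whereas you use the elementary bound $C_{\max}\ge |E(G)|/2$ together with Hoeffding. Your route makes the exceptional set of graphs explicitly of probability $e^{-\Theta(n^2)}$ and covers every fixed $\beta>0$ rather than the paper's $\beta\approx 1$, at the harmless cost of the constant $1/16$ in place of $1/8$.
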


\begin{proof}
	We begin with a graph-by-graph bound. The probability of the complement event $\mathcal{S}^c$ is given by the total measure of states below the threshold normalized by the partition function:
	\begin{equation}
		\Prob_\beta(\mathcal{S}^c \mid G) = \frac{\sum_{\sigma : C(\sigma) \le \alpha_{\mathrm{GW}} C_{\max}} e^{\beta C(\sigma)}}{Z(\beta)}.
	\end{equation}
	Bounding the numerator by the total number of states $2^n$ multiplied by the maximum weight in the set, and bounding the denominator by the weight of the optimal solution alone ($e^{\beta C_{\max}}$), we obtain:
	\begin{equation}
		\Prob_\beta(\mathcal{S}^c \mid G) \le \frac{2^n e^{\beta \alpha_{\mathrm{GW}} C_{\max}(G)}}{e^{\beta C_{\max}(G)}} = \exp\left( n \ln 2 - \beta(1 - \alpha_{\mathrm{GW}}) C_{\max}(G) \right).
		\label{eq:prob_bound}
	\end{equation}
	For $G(n, \tfrac{1}{2})$, the maximum cut scales as $C_{\max}(G) = \frac{n^2}{8} + \Theta(n^{3/2})$ with high probability \cite{demboExtremalCutsSparse2017,mezardSpinGlassTheory2004}. Substituting this scaling into \cref{eq:prob_bound} and assuming $\beta \approx 1$ (consistent with the effective temperature observed in experiments):
	\begin{equation}
		\Prob(\mathcal{S}^c) \le \exp\left( n \ln 2 - \frac{1 - \alpha_{\mathrm{GW}}}{8} n^2 + o(n^2) \right) = \exp\left( - \Theta(n^2) \right).
	\end{equation}
	Thus, the system exceeds the threshold with overwhelming probability.
\end{proof}

In other words, \cref{eq:prob_bound} states that if in future versions of these devices the effective temperature grows linearly with the number of graph nodes they can handle, then they would retain the same probability for beating SDP's worst case promise on random graph instances.

\subsection{Approximation via Gaussian Analysis}

We can further refine this analysis using the annealed approximation for the density of states derived in the previous section. Let $\lambda_n(k) = \E[\Omega(k)]$. The probability of observing a cut in the range $[k_{\mathrm{GW}}, k_{\max}]$, where $k_{\mathrm{GW}} := \alpha_{\mathrm{GW}} k_{\max}$, is approximately:
\begin{equation}
	\Prob_\beta(k_{\mathrm{GW}} \le C \le k_{\max}) \approx \frac{\int_{k_{\mathrm{GW}}}^{k_{\max}} \lambda_n(k) e^{\beta k} \, dk}{\int_{-\infty}^{k_{\max}} \lambda_n(k) e^{\beta k} \, dk}.
\end{equation}
Using the Gaussian approximation for $\lambda_n(k)$ centered at $\mu = n^2/8$, the integrand takes the form $\exp\left( -a(k-m)^2 \right)$ where $a = 8/n^2$ and the shifted mean is $m = \mu + \beta n^2/16$. Defining the standard normal variable $z(x) = 4(x-m)/n$, the probability can be expressed via the standard normal CDF $\Phi$:
\begin{equation}
	\Prob_\beta(k_{\mathrm{GW}} \le C \le k_{\max}) \approx 1 - \frac{\Phi(z(k_{\mathrm{GW}}))}{\Phi(z(k_{\max}))}.
\end{equation}
Given the expansion $k_{\max} = \mu + P_* \frac{n^{3/2}}{4} + o(n^{3/2})$ where $P_* \approx 0.7632$ is the ground-state constant of the SK model \cite{panchenkoSherringtonKirkpatrickModelOverview2012}, we find that $z(k_{\mathrm{GW}})$ scales as $-n(1-\alpha_{\mathrm{GW}})/2$. Consequently, $\Phi(z(k_{\mathrm{GW}}))$ vanishes exponentially, confirming the result of \cref{thm:beats_gw}.

\subsection{Implications for Quantum Advantage Claims}

While formally correct, the result in \cref{thm:beats_gw} does not constitute evidence of quantum advantage nor a challenge to standard complexity classes. The hardness results established by H\aa stad ($16/17$) and those implied by the UGC ($\alpha_{\mathrm{GW}}$) apply to \textit{worst-case} adversarial instances. In contrast, for dense random graphs such as $G(n, \tfrac{1}{2})$, the optimal cut is $C_{\max} \approx n^2/8 + \Theta(n^{3/2})$, while a random assignment yields $n^2/8$ in expectation. The approximation ratio for even a trivial random guess is therefore:
\begin{equation}
	\frac{n^2/8 + O(n)}{\text{OPT}(G)} = \frac{n^2/8}{n^2/8 + \Theta(n^{3/2})} \to 1 \quad \text{as } n \to \infty.
\end{equation}
Since $1 > 16/17 > \alpha_{\mathrm{GW}}$, trivial classical algorithms theoretically beat both H\aa stad's bound and the GW constant on this distribution asymptotically. Furthermore, Polynomial-Time Approximation Schemes (PTAS) exist for MaxCut on graphs with density $\Omega(n^2)$ \cite{aroraPolynomialTimeApproximation1999}. Consequently, observing that Dirac-3 exceeds $\alpha_{\mathrm{GW}}$ on $G(n, \tfrac{1}{2})$ merely confirms that the system avoids catastrophic failure modes; it does not demonstrate computational superiority over standard classical heuristics in the regime where the problem is actually hard.
\section{Conclusions}\label{sec:conclusion}

In this work, we have critically examined the claims of computational advantage put forward in \cite{nguyen2024entropy} for the so-called entropy computing paradigm implemented in the Dirac-3 photonic system. Our analysis combined a reproduction of the reported numerical benchmarks, a statistical interpretation of the device dynamics, and a theoretical investigation of the optimization landscapes relevant to the studied \textsc{MaxCut} instances.

Our numerical results demonstrate that the reported performance of the Dirac-3 system can be closely matched by standard classical optimization methods and widely used metaheuristics when applied to the same problem instances. This observation suggests that the benchmark problems considered in \cite{nguyen2024entropy} do not, by themselves, provide evidence of a qualitative computational advantage, particularly in the absence of comparisons against state-of-the-art classical solvers or harder instance families.

From a physical and statistical perspective, we have shown that the observed behavior of the Dirac-3 system is consistent with sampling from an effective Gibbs distribution at a problem-dependent temperature. While such behavior is interesting in its own right as an instance of analog stochastic optimization in an open photonic system, it does not imply access to non-classical computational resources. Rather, it places the device within a broader class of classical sampling-based heuristics whose performance is governed by well-understood thermodynamical and algorithmic principles.

Our main theoretical result further clarifies the limitations of the reported \textsc{MaxCut} performance. For dense random graphs, we showed that a broad class of sampling-based algorithms will, with high probability, produce cuts with values exceeding a fixed threshold above the trivial baseline, independent of any system-specific dynamics. Consequently, the observation of near-optimal or high-value cuts on such instances does not constitute evidence of enhanced optimization capability unless it is accompanied by scaling analyses or performance guarantees that surpass known approximation thresholds.

Taken together, these findings indicate that the results reported in \cite{nguyen2024entropy} are best interpreted as a demonstration of a physically motivated heuristic optimizer rather than as evidence for a new paradigm offering computational or quantum advantage. Establishing such an advantage would require substantially larger and more diverse benchmarks, direct comparisons with modern classical algorithms, and clear demonstrations of scaling behavior that cannot be explained within existing algorithmic or statistical frameworks.

We conclude that while entropy computing in open photonic systems is an interesting direction for experimental exploration, the current evidence does not support the stronger claims regarding its optimization power or its relevance to hard combinatorial problems. Future work along this line would benefit from closer integration with the theoretical literature on approximation algorithms, complexity, and statistical mechanics of optimization, as well as from more rigorous benchmarking standards.

\section{Code Availability}
The code generated during and/or analysed during the current study is available from the corresponding author on reasonable request.
\section{Acknowledgements}
During the preparation of this work, the authors used Google Gemini to assist with the generation of some of the Julia programs. After using this tool, the authors reviewed and edited the content as needed and take full responsibility for the content of the publication.

\newpage

\printbibliography

\end{document}

% --- supplement: supplements.tex ---

\maketitle

\section{Metaheuristic algorithms getting out of local minima}
Here we show more examples of two metaheuristic algorithms, Particle Swarm Optimization (PSO) \cite{kennedyParticleSwarmOptimization1995} and Evolutionary Centers Algorithm (ECA) \cite{mejia-de-diosNewEvolutionaryOptimization2019}, finding the global minimum of the two dimensional function:
\begin{equation}
        g(x,y) = \frac{x^4 + y^4}{4} - \frac{5(x^3 + y^3)}{3} + 3(x^2 + y^2)\, .
\end{equation}

The two metaheuristic algorithms used are two population based algorithms. In this case we used $N=14$ population count and forced the initial populations, visible as dull green dots, to be centered around the three local minima.
\begin{figure}[ht]
    \centering
    \includegraphics[width = \textwidth]{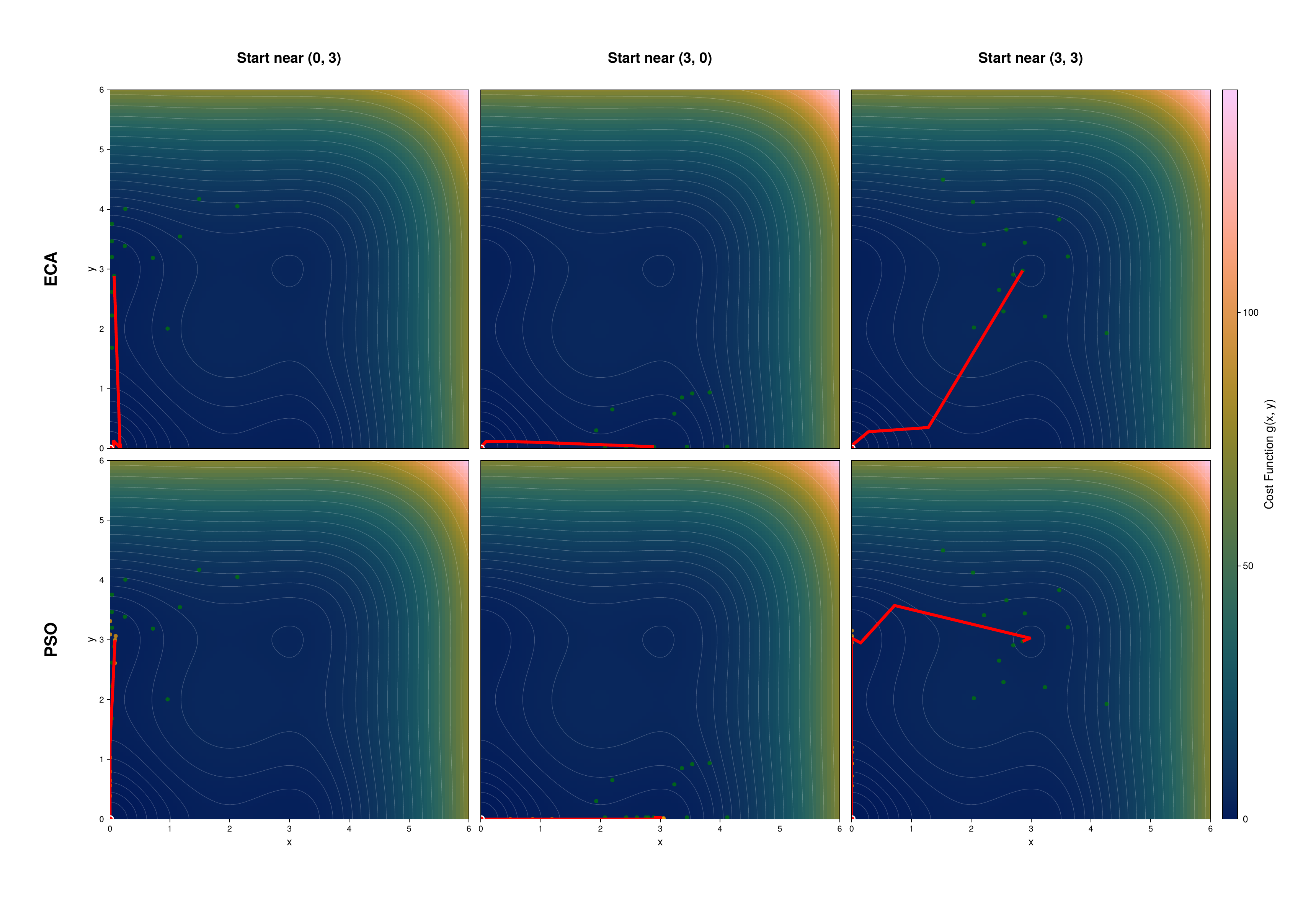}
    \caption{The plot shows how Evolutionary Centers Algorithms \cite{mejia-de-diosNewEvolutionaryOptimization2019} and Particle Swarm Optimization \cite{kennedyParticleSwarmOptimization1995} can find the global minimum of a cost function despite starting near its local minima.}
    \label{fig:ECA_PSO}
    
\end{figure}
% \section{Performance Benchmarking}

% My investigation reveals performance comparisons primarily conducted by QCi-affiliated researchers. For \textbf{financial fraud detection}, the EQC-based CVQBoost method achieved \textbf{comparable accuracy to XGBoost} when data balancing was applied \cite{emami2025financial}. Notably, CVQBoost demonstrated \textbf{substantial runtime improvement} over single-core XGBoost implementations, with Dirac-3's runtime remaining almost constant irrespective of problem size, though pre- and post-processing times scaled modestly \cite{emami2025financial}. It also showed favorable performance against state-of-the-art multi-core and GPU XGBoost implementations \cite{emami2025financial}. Further comparisons were made with classical solvers like Hexaly and Scipy’s SLSQP solver, focusing on runtime and solution quality \cite{emami2025financial}.

% In the context of the \textbf{Traveling Salesman Problem}, Ising-based architectures, including EQC, show \textbf{improved scalability for larger problem sizes} compared to gate-based quantum computers \cite{padmasola2025solving}. While gate-based systems struggled with instances beyond 6 nodes due to noise and scalability issues, D-Wave quantum annealers handled up to 12 nodes, and EQC extended this to 18 nodes \cite{padmasola2025solving}. Despite this scalability, EQC solutions tended to be suboptimal due to hardware limitations and challenges in achieving ground state convergence as problem size increased \cite{padmasola2025solving}. Nevertheless, Ising machines reportedly offer "significant time advantages over classical methods" \cite{padmasola2025solving}.

% For \textsc{MaxCut} problems, coherent Ising machines (CIMs) demonstrated a computation time that was largely \textbf{independent of graph densities} \cite{honjo2021100000spin}. In comparisons with D-Wave quantum annealers, CIMs showed that their success probability did not significantly decrease with increasing problem size, a stark contrast to D-Wave's rapid degradation \cite{ntt:performance}. The \textbf{all-to-all connectivity} inherent in CIMs (and EQC) provides an advantage by eliminating the need for complex problem conversion, unlike D-Wave's fixed Chimera graph architecture \cite{ntt:performance}. However, D-Wave slightly outperformed CIMs for sparse problems with low coupling densities \cite{ntt:performance}.

\section{System Specifications and Operational Environment}

The Dirac-3 system is engineered with the following specifications:
\begin{itemize}
    \item \textbf{Variables:} 949 \cite{qci:dirac3}.
    \item \textbf{Connectivity:} All-to-all \cite{qci:dirac3}.
    \item \textbf{Architecture:} Qudit-based \cite{qci:dirac3}.
    \item \textbf{Power Consumption:} Under 100 Watts \cite{qci:dirac3}.
    \item \textbf{Operating Temperature:} 20°C to 27°C, operating at room temperature without the need for cryogenic or isolated housing \cite{qci:dirac3}.
    \item \textbf{Dimensions:} 5U rack-mounted \cite{qci:dirac3}.
\end{itemize}
The system's design leverages the principles of quantum mechanics in open quantum systems, which differ from conventional quantum computers that require precise isolation from environmental noise \cite{qci:dirac3}.

\section{Stochastic Nature and Solution Search Strategy}

EQC is fundamentally a \textbf{stochastic computational framework} that deliberately utilizes non-deterministic processes in its solution search strategy.
\begin{itemize}
    \item \textbf{Random Noise and Quantum Fluctuations:} The intrinsic randomness arising from quantum fluctuation of single-photon counting shot noise is continuously introduced into each feedback loop. This deliberate injection of noise is critical for the system to explore vast search spaces and escape local minima \cite{nguyen2024entropy}. The system is maintained at ultra-low mean photon numbers (less than 1\% of pulses having more than one photon) to maximize this quantum stochasticity \cite{nguyen2024entropy}.
    \item \textbf{Probabilistic Search and Dissipative Dynamics:} The mapping of a target Hamiltonian onto an effective dissipative operator guides the evolution of photon qudit states. This process introduces controlled loss and decoherence, probabilistically suppressing unwanted states and favoring the emergence of lower-energy states as solutions \cite{emami2025financial, nguyen2024entropy}. The iterative evolution of the system's state allows it to relax toward an optimal solution, and its trajectories can exhibit "quantum-tunneling-like" behavior on non-convex energy landscapes, aiding in escaping local minima \cite{nguyen2024entropy}.
    \item \textbf{Annealing-like Behavior:} The dissipative dynamics of EQC share similarities with annealing processes, where the system gradually settles into a low-energy state corresponding to the optimal solution.
    \item \textbf{Quantum Zeno Effect (QZE):} EQC leverages the Quantum Zeno effect, where frequent measurements or strong dissipation can "freeze" the evolution of a quantum system \cite{berwald:grover}. This is implemented through controllable loss mechanisms \cite{berwald:grover} and has been theoretically shown to support an \textbf{optimal Grover-like quantum speedup} ($\sqrt{N}$ scaling) for unstructured search problems \cite{berwald:grover}. The quantum Zeno blockade, involving high photon loss rates engineered through nonlinear optics, is a key component in the entropy computing paradigm \cite{berwald:grover}.
\end{itemize}

% \section{Independent Evaluation and Academic Scrutiny}

% Based on what I could find in my research, there is \textbf{limited evidence of truly independent, third-party evaluations} or user studies of the operational EQC system (Dirac-3) that have been widely published in peer-reviewed journals.
% \begin{itemize}
%     \item \textbf{Internal Benchmarking:} QCi provides "Cloud + concierge" services where their own application scientists collaborate with users to formulate and execute problems \cite{qci:dirac3}. The performance benchmarks on fraud detection and TSP are presented in papers authored by QCi-affiliated scientists, which are predominantly published on arXiv, a preprint server, rather than peer-reviewed journals \cite{emami2025financial, nguyen2024entropy, padmasola2025solving}.
%     \item \textbf{Peer-Reviewed Foundational Work:} As it was mentioned above, Quantum Zeno Effect, is claimed to be one of the physcial backbones of EQC. The QZE has been studied in peer-reviewed articles which have been published in journals with high impact factors. For example, "Grover speedup from Many Forms of the Zeno Effect" has been published in the "Quantum" journal \cite{berwald:grover} and "Zeno-effect computation: Opportunities and challenges" was published in "Physical Review A" \cite{berwald2025zeno}. It is important to note that given QCi's lack of clarity about the mechanisms behind Dirac-3, whether, Dirac-3 benefits from the QZE is highly unlikely and cannot be directly confirmed; more on this in the next section.
% \end{itemize}

% The perceived lack of widespread independent academic scrutiny for EQC's direct computational claims, despite QCi's assertions of significant advancement, can be attributed to several factors:
% \begin{itemize}
%     \item \textbf{Nature of "Quantum" Claims:} QCi's own publications state that the current Dirac-3 hybrid implementation \textbf{"does not utilize quantum entanglement"}. This is a critical distinction, as quantum entanglement is generally considered a hallmark for achieving "quantum advantage" in the broader quantum computing community \cite{nguyen2024entropy}. The speed of the current system is attributed to "relaxation mechanisms" and "fast propagation and processing of information within light" rather than quantum entanglement. Furthermore, the optical components of the current hybrid system "could in principle be instead classically simulated", and its reliance on classical electronic feedback "destroys meaningful quantum superposition". While some theoretical work suggests that even Gaussian states in certain measurement bases can lead to non-classical behavior not efficiently simulable \cite{nguyen2024entropy}, the direct "quantum" claims of the current hardware remain a subject of debate \cite{reddit:clarify}.
%     \item \textbf{Concerns over Transparency and Rigor:} External commentators have raised concerns regarding the lack of detailed hardware configuration and temperature information in QCi's publications \cite{reddit:clarify}. There are also claims that some demonstrated problems are "simple af optimization problems that can be easily solved by running common algos on a CPU/GPU" \cite{reddit:clarify}. Legal complaints against QCi have alleged that the company "overstated the capabilities of QCI’s quantum computing technologies" and that former personnel described it as "just a glorified University project" \cite{rosen:lawsuit}.
%     \item \textbf{Unconventional Paradigm:} EQC's "contrarian" philosophy of "harnessing noise" stands in direct opposition to the prevailing paradigm in quantum computing, which focuses on isolating qubits from environmental noise in highly controlled, often cryogenic, conditions \cite{qci:dirac3}. This fundamental difference may naturally lead to a higher barrier for widespread acceptance and rigorous independent validation within the established academic framework of quantum computing. Moreover, while NTT Research actively benchmarks coherent Ising machines against D-Wave quantum annealers, the provided sources do not indicate similar independent comparisons involving QCi's EQC system \cite{ntt:performance}.
% \end{itemize}

% In conclusion, while EQC offers a unique approach to optimization by leveraging dissipative quantum dynamics and stochastic processes at room temperature, there is great doubt about whether it would ever get widespread independent academic validation for its practical computational advantages.

% \newpage

% \section{Questions and Answers (Q\&A)}

% \begin{description}[font=\normalfont, style=unboxed]
%     \item[Q: What is Entropy Quantum Computing (EQC) and how does it differ from conventional quantum computing?]
%     \textbf{A:} EQC is a computational paradigm from Quantum Computing Inc. (QCi) that solves optimization problems. Unlike conventional quantum computers that require cryogenic cooling and isolation to protect fragile quantum states (qubits), EQC operates at room temperature by intentionally embracing and harnessing environmental noise and decoherence within an open quantum system.

%     \item[Q: How does EQC work and what is its hardware?]
%     \textbf{A:} EQC is implemented on the Dirac-3 system, a hybrid analog machine using quantum optics and classical electronics. Its mechanism is based on an optical feedback loop:
%     \begin{itemize}
%         \item \textit{Hardware:} The system uses a laser, optical amplifiers, a mixer, and single-photon counters. It does not require a cryogenic setup.
%         \item \textit{Information Encoding:} It encodes variables as photon numbers in optical pulses, representing them as "qudits" rather than binary qubits. This allows for the native handling of integer or continuous variables.
%         \item \textit{Operational Loop:} An optical signal is repeatedly passed through a loop. In each cycle, a "mixer" modifies the signal based on the problem's Hamiltonian, and a lossy medium then dissipates energy. Higher-energy (suboptimal) states are preferentially removed.
%         \item \textit{Role of Noise:} Intentional noise (or "entropy") is injected via quantum fluctuations from photon counting. This stochasticity helps the system escape local minima, guiding the computation to relax toward the optimal, lowest-energy solution, similar to annealing.
%     \end{itemize}

%     \item[Q: Does EQC exhibit genuinely quantum characteristics or offer a quantum advantage?]
%     \textbf{A:} EQC operates in a grey area and is not a quantum computer in the conventional sense. The term "quantum advantage" is debated.
%     \begin{itemize}
%         \item \textit{Quantum Phenomena:} It claims to leverage the \textit{Quantum Zeno Effect} (using frequent interactions to suppress evolution) and harnesses the inherent quantum shot noise of photons.
%         \item \textit{Missing Hallmarks:} QCi's publications clarify that EQC \textit{does not use quantum entanglement}, and its classical feedback loop \textit{destroys meaningful quantum superposition}. Since entanglement and superposition are considered the primary sources of quantum advantage, it is debatable whether EQC can achieve it.
%     \end{itemize}

%     \item[Q: What types of problems can EQC solve?]
%     \textbf{A:} EQC is designed for combinatorial optimization tasks. Demonstrated applications include financial fraud detection (using boosting methods), solving small instances of the Traveling Salesman Problem (TSP), and tackling problems that can be mapped to a MAX-CUT formulation.

%     \item[Q: Are EQC's performance claims substantiated by independent research?]
%     \textbf{A:} No, the claims are not fully substantiated by independent, peer-reviewed research. Independent validation is currently limited.
%     \begin{itemize}
%         \item \textit{Source of Benchmarks:} Most performance data comes from papers authored by QCi-affiliated researchers published on the arXiv pre-print server, not in peer-reviewed journals.
%         \item \textit{Performance Context:} While benchmarks show competitive runtimes against single-core classical solvers, they do not yet demonstrate an advantage over highly optimized algorithms on modern, parallel high-performance computers.
%     \end{itemize}

%     \item[Q: How does EQC compare to other approaches like Coherent Ising Machines (CIMs)?]
%     \textbf{A:} Both are room-temperature photonic optimization solvers, but they are architecturally distinct and CIMs are more academically vetted.
%     \begin{itemize}
%         \item \textit{Fundamental Difference:} EQC is a dissipative system that intentionally \textit{harnesses} noise and loss. In contrast, CIMs are based on a network of coherent oscillators and generally aim to \textit{minimize} loss to find a solution.
%         \item \textit{Validation and Scale:} CIMs have been more extensively benchmarked in peer-reviewed literature and have demonstrated scalability up to 100,000 variables. EQC's performance claims are less independently verified. Currently, there is no clear evidence to prefer EQC over CIMs.
%     \end{itemize}
    
%     \item[Q: What are the key technical specifications and scalability of Dirac-3?]
%     \textbf{A:} EQC is designed for scalability, but its practical limitations are still being explored.
%     \begin{itemize}
%         \item \textit{Specifications:} Dirac-3 is a 5U rack-mounted unit that operates at room temperature (20–27°C) with under 100 W of power. It is specified to handle 949 variables with \textit{all-to-all connectivity}, an architectural advantage over systems with limited connectivity.
%         \item \textit{Scalability Claims:} QCi claims its core computation runtime is nearly constant regardless of problem size. However, this requires independent validation. Potential limiting factors include the accumulation of analog noise and the efficiency of its single-photon detectors.
%     \end{itemize}
    
%     \item[Q: What are the main criticisms of EQC and QCi's claims?]
%     \textbf{A:} Criticisms focus on a lack of transparency and independent validation.
%     \begin{itemize}
%         \item \textit{Key Concerns:} These include limited details on the hardware configuration, a heavy reliance on company-authored pre-prints, and the explicit lack of quantum entanglement.
%         \item \textit{External Scrutiny:} Legal complaints have alleged that the company has overstated its technological capabilities, and some external critics argue that the demonstrated problems are simple enough to be solved efficiently with classical methods.
%     \end{itemize}
% \end{description}

\newpage

% --- PRINT BIBLIOGRAPHY ---
\printbibliography